\documentclass[10pt,letterpaper,reqno]{amsart}

\usepackage{amsmath,amsthm,amssymb,latexsym} 
\usepackage{fixmath} 
\usepackage{fullpage} 
\usepackage{enumerate} 
\usepackage{graphicx} 
\usepackage{color} 
\usepackage{accents} 
\usepackage[all]{xy} 

\usepackage[pdftex,bookmarks,colorlinks,breaklinks]{hyperref}  
\definecolor{dullmagenta}{rgb}{0.4,0,0.4}   
\definecolor{darkblue}{rgb}{0,0,0.4}
\definecolor{mygrey}{gray}{0.5} 
\hypersetup{linkcolor=red,citecolor=blue,filecolor=dullmagenta,urlcolor=darkblue} 

\newcommand{\adzham}{\href{mailto:adzham@unco.edu}{\texttt{adzham@unco.edu}}}
\newcommand{\comment}[1]{}
\newtheorem{theorem}{Theorem}
\newtheorem{corollary}[theorem]{Corollary}
\newtheorem{lemma}[theorem]{Lemma} 

\theoremstyle{definition}
\newtheorem{definition}{Definition}

\numberwithin{theorem}{section}
\numberwithin{equation}{section}
\numberwithin{definition}{section}

\theoremstyle{remark}

\newenvironment{notation}{\noindent$\triangleleft$ {\bf
Notation:}}{$\triangleright$\medskip}

\begin{document}

\title[Discrete Lagrangian Dynamics]{On the Lagrangian Structure of 
the Discrete Isospectral and Isomonodromic Transformations}
\author[A. Dzhamay]{Anton Dzhamay} 
\address{School of Mathematical Sciences\\ 
University of Northern Colorado\\
Greeley, CO 80639}
\email\adzham
\urladdr{\url{www.unco.edu/adzham}}
\thanks{Supported in part by the University of Northern Colorado Summer~2006 SPARC Small Grant
Assistance Program}
\keywords{Discrete integrable systems, discrete Euler-Lagrange equations, difference Painlev\'e
equations}
\begin{abstract}
We study the Lagrangian properties of the discrete isospectral and isomonodromic 
dynamical systems. We generalize the 
Moser-Veselov approach to integrability of discrete isospectral systems via the 
re-factorization of matrix polynomials to  matrix rational 
functions with a simple divisor and consider in detail the case of two poles or,
equivalently, of two elementary factors. In this case we establish, by  
explicitly writing down the Lagrangian, that the
isospectral dynamic is Lagrangian. Next, we show how to make this Lagrangian time-dependent 
to obtain the equations of the isomonodromic dynamic. In some special cases such 
equations are known to reduce to the difference Painlev\'e equations.  
We show how to obtain the difference Painlev\'e V equation in that way, establishing that 
dPV can be written in the Lagrangian form.
\end{abstract} 
\keywords{discrete integrable systems; discrete Euler-Lagrange equations; 
difference Painlev\'e equations}
\subjclass[2000]{39A10, 14H70, 70H06, 34M55}
\maketitle

\section{Introduction}\label{sec:intro} 

The theory of completely integrable systems and soliton equations is justly known for its rich and
often unexpected connections with a wide range of other branches of mathematics and mathematical
physics. In recent years its discrete variant, the theory of discrete completely integrable systems,
started to attract a considerable amount of attention. This subject is a part of a more general
field of discrete Lagrangian mechanics, which itself is gaining importance partly
due to the development of new numerical algorithms based on discrete variational integrals, 
see, for example, the recent survey by J.~Marsden and M.~West, \cite{MarWes:2001:DMVI}. 
In a series of papers \cite{Ves:1988:ISWDTDO,MosVes:1991:DVSCISFMP,Ves:1991:ILRFMP} 
A.~Veselov and J.~Moser showed that the discrete analogues 
of many classical integrable systems, e.g., the Neumann system and the spinning top, 
are related to the re-factorization transformation of certain matrix polynomials. Such a 
representation explains the integrability mechanism for these systems, since it 
is a discrete version of the Lax-pair representation. Hence, similarly to the continuous case, 
it can be used to integrate the system using theta functions.
A large number of such examples
can also be found in a recent encyclopedic book by Suris, \cite{Sur:2003:PIDHA}. The 
relationship between matrix factorizations and integrable systems was observed earlier by 
Symes \cite{Sym:1981:ASFNTL}, 
see also a related work by Deift et al, \cite{DeiLiTom:1989:MFIS}. 

The dynamic generated just by the re-factorization transformations is also known
as the \emph{isospectral dynamic}. Combining re-factorization with a shift in the
spectral variable results in a different dynamics called \emph{isomonodromic},
since it originates in the theory of isomonodromic transformations of systems 
of linear difference equations recently developed by A.~Borodin 
\cite{Bor:2004:ITLSDE}, see also \cite{Kri:2004:ATDEWRECRP}. The theory of discrete isomonodromic
transformations is important in part because, similarly to 
the continuous case, under certain conditions 
such transformations give rise to the \emph{discrete Painlev\'e equations} from 
Sakai's hierarchy, \cite{Sak:2001:RSAWARSGPE}, thus clarifying the geometry of these equations.

In \cite{Krichever:uq} I.~Krichever conjectured that both the isospectral and the 
isomonodromic dynamic can (and maybe should) be considered from the Lagrangian point
of view. In this paper we make a first step towards verifying this conjecture. 
We generalize the Moser-Veselov approach from the matrix polynomials to 
a large class of rational matrix functions on the Riemann sphere 
whose determinant divisor is \emph{simple}. This generalization is 
important if one wants to consider discrete integrable 
systems that have higher-genus spectral curves. In addition, rational matrices 
of this type play an important role in 
Krichever's approach, \cite{Kri:2004:ATDEWRECRP}, to the isomonodromic deformations.
Such matrices have natural factorization, where factors correspond to poles
of the determinant divisor. In this paper we focus our attention on the simplest non-trivial case
of two poles (and hence, two factors) and study in detail what happens when two factors are 
interchanged. At this point there is no restrictions on the rank of the matrices.
Our first result is that in this case both the isospectral and the isomonodromic dynamic
is \emph{Lagrangian}. Namely, we introduce a special coordinate system and then
explicitly write down the expressions for the Lagrangian functions. Next, we
restrict the rank to be $2$ and verify that in this case the isomonodromic dynamic 
gives rise to the difference Painlev\'e equation dPV of the Sakai's hierarchy, 
thus establishing that dPV can be written in the Lagrangian form.
This is our second result.

In the remainder of the introduction we give a detailed description of the 
setup of the problem and of our results.

\subsection{Discrete Lagrangian systems}\label{ssec:disc_lagr_sys} 
Continuous dynamical systems can be considered in the Lagrangian
or in the Hamiltonian framework. Of those two,  the \emph{Lagrangian} approach is the one
that naturally generalizes to the discrete case, see 
\cite{Ves:1988:ISWDTDO, MosVes:1991:DVSCISFMP,MarWes:2001:DMVI}.
Let $\mathcal{Q}$ be the 
configuration space of our system and let $n\in\mathbb{Z}$ be the discrete time parameter. 
In the continuous case the Lagrangian 
$\mathcal{L}\in \mathcal{F}(T \mathcal{Q})$ is a function on the tangent bundle of $\mathcal{Q}$. 
For the discrete case we need to change the point $(\mathbf{Q},\dot{\mathbf{Q}})$ in the
tangent space to the pair of points $(\mathbf{Q},\tilde{\mathbf{Q}})$ in the configuration
space itself. The Lagrangian then becomes a function on the square of the
configuration space, $\mathcal{L}\in \mathcal{F}(\mathcal{Q}\times \mathcal{Q})$. The action 
functional $\mathcal{S}$ is then defined on the space of sequences $\{\mathbf{Q}_{k}\}$, 
$k\in\mathbb{Z}$ by the formal sum 
\begin{equation*}
  \mathcal{S}(\{\mathbf{Q}_{k}\}) = \sum_{k} \mathcal{L}(\mathbf{Q}_{k}, \mathbf{Q}_{k+1}),
\end{equation*}  
and the variational principle $\delta \mathcal{S} = 0$ that selects the trajectories of the
system, when written in a coordinate chart, takes the form of the 
\emph{discrete Euler-Lagrange equations}
\begin{equation}
  \frac{\partial \mathcal{L}}{\partial \mathbf{Y}} (\undertilde{\mathbf{Q}},\mathbf{Q}) + 
  \frac{\partial \mathcal{L}}{\partial \mathbf{X}} (\mathbf{Q},\widetilde{\mathbf{Q}}) = 0,
  \label{eq:discrEL}
\end{equation}
where we use the notation $\undertilde{\mathbf{Q}} = \mathbf{Q}_{k-1}$, $\mathbf{Q} = \mathbf{Q}_{k}$,
and $\widetilde{\mathbf{Q}} = \mathbf{Q}_{k+1}$. 
These equations then implicitly define the map (or, more precisely, a correspondence)
$\widetilde{\mathbf{Q}} = \phi(\undertilde{\mathbf{Q}},\mathbf{Q})$, which in turn defines 
the shift (or step) map  $\Phi:\mathcal{Q}\times \mathcal{Q} \to 
\mathcal{Q}\times \mathcal{Q}$ by $\Phi(\undertilde{\mathbf{Q}}, \mathbf{Q}) = 
(\mathbf{Q},\widetilde{\mathbf{Q}})$. The map $\Phi$ is symplectic w.r.t.~the 2-form 
$\sigma = \dfrac{\partial^{2} \mathcal{L}}{\partial \mathbf{X} \partial 
\mathbf{Y}}d\mathbf{X}\wedge d\mathbf{Y}$ on $\mathcal{Q}\times \mathcal{Q}$. Alternatively,
using the discrete version of the Legendre transform by defining the conjugated
momentum
$\mathbf{P} = \dfrac{\partial \mathcal{L}}{\partial 
\mathbf{Y}}(\undertilde{\mathbf{Q}},\mathbf{Q})\in T_{\mathbf{Q}}^{*}\mathcal{Q}$, we see that 
the discrete Euler-Lagrange equations are equivalent to the system
\begin{equation}
\left\{
\begin{aligned}
  \mathbf{P} &=-\dfrac{\partial\mathcal{L}}{\partial
   \mathbf{X}}(\mathbf{Q},\widetilde{\mathbf{Q}})\\[0.1\baselineskip]
  \widetilde{\mathbf{P}} &= \dfrac{\partial \mathcal{L}}{\partial \mathbf{Y}} (\mathbf{Q},
  \widetilde{\mathbf{Q}}),
\end{aligned} 
\right.  \label{eq:discrEL-Ham}
\end{equation}
where the first equation follows from (\ref{eq:discrEL}) and is an implicit equation for 
$\widetilde{\mathbf{Q}}=\widetilde{\mathbf{Q}}(\mathbf{Q},\mathbf{P})$. Hence we get
a map $\Psi:T^{*}_{\mathcal{Q}}\to T^{*}_{\widetilde{\mathcal{Q}}}$,
$\Psi(\mathbf{Q},\mathbf{P}) = (\widetilde{\mathbf{Q}},\widetilde{\mathbf{P}})$ which is
symplectic w.r.t.~the standard symplectic structure. In what follows by the \emph{equations of 
motion} of a discrete Lagrangian system we mean either (\ref{eq:discrEL}) or (\ref{eq:discrEL-Ham}), 
with the corresponding discrete dynamics given by the maps $\Phi$ or $\Psi$ respectively.

\subsection{Discrete integrable systems}\label{ssec:discr_integr_sys} 
The Moser--Veselov approach to the discrete \emph{integrable} systems 
is based on the discrete version of the Lax pair representation and can be briefly described 
as follows. Given a discrete dynamical system, we look for a class $\mathcal{P}$ 
of matrix polynomials $\mathbf{L}(z)$ in 
a spectral variable $z$ and a parameterization map $\eta:\mathcal{Q}\times \mathcal{Q}\to 
\mathcal{P}$, defined on some dense open set, such that:
\begin{enumerate}[(i)]
  \item there is a well-defined factorization rule $\mathbf{L}(z) = \mathbf{L}_{1}(z) 
  \mathbf{L}_{2}(z)$, where the ordering is important, such that the re-factorized matrix
  $\tilde{L}$ obtained by the interchanging the order of the factors,
  $\tilde{\mathbf{L}}(z)=\mathbf{L}_{2}(z)\mathbf{L}_{1}(z)$,
  is again in $\mathcal{P}$ and hence can be written as 
  $\tilde{\mathbf{L}}(z) = \tilde{\mathbf{L}}_{1}(z)
  \tilde{\mathbf{L}}_{2}(z)$, this rule defines the re-factorization map 
  $R:\mathcal{P}\to \mathcal{P}$;
  \item under the parameterization $\eta$ the re-factorization map $R$  corresponds to 
  the shift map $\Phi$ of our discrete dynamical system.
\end{enumerate}
Note that the re-factorization map can also be written in the form 
\begin{equation}
  \tilde{\mathbf{L}} = \mathbf{M}^{-1}\mathbf{L}\mathbf{M},\label{eq:discLaxPair}
\end{equation}
where $\mathbf{M} = \mathbf{L}_{1}(z)$. Equation~(\ref{eq:discLaxPair}) is known as a \emph{discrete
Lax pair} representation of the system. Similarly to the continuous case, finding such a 
representation shows that the dynamic of the system is isospectral. Therefore, it preserves the 
spectral curve $\Gamma$ of the operator $\mathbf{L}(z)$, which implies the integrability 
of the system and also makes it possible to 
obtain the $\theta$-function formulas for solutions of the system in the usual way, see 
\cite{Ves:1991:ILRFMP}.

\subsection{Rational matrices anzats}\label{ssec:rational_matrix_functions_anzats} 
Let us now allow $\mathbf{L}(z)$ to be a meromorphic $r\times r$-matrix
function on the Riemann sphere with the poles $z_{1},\dots,z_{n}$. 
We restrict our attention to the matrices that are generic in the
following sense:
\begin{enumerate}[(i)]
  \item all poles $z_{i}$ of $\mathbf{L}(z)$ are simple;
  \item the divisor of  $\mathbf{L}(z)$ is \emph{simple} as well, where by the
  \emph{divisor of $\mathbf{L}(z)$} we mean the divisor of its 
   determinant function,
  $\mathcal{D} = (\mathbf{L}(z))=(\det\mathbf{L}(z)) = \sum_{i}z_{i} -
  \sum_{j}\zeta_{j}$. This is equivalent to the condition that the residue matrices
  $\mathbf{L}_{k} := \operatorname{res}_{z_{k}}\mathbf{L}(z)$ are of
  \emph{rank one}.
\end{enumerate}
For a fixed divisor $\mathcal{D}$ we denote the space of all such matrices by $\mathcal{M}_{r}^{D}$.

Without any loss of generality we can further restrict out attention to the case  
$z_{0}=\infty\notin \mathcal{D}$ and
$\mathbf{L}_{0}=\lim_{z\to\infty}\mathbf{L}(z)$ is invertible and diagonalizable. 
Any such matrix $\mathbf{L}(z)$ has two different representations, \emph{additive}:
\begin{align}
  \mathbf{L}(z) &= \mathbf{L}_{0}  + \sum_{k}\frac{\mathbf{L}_{k}}{z-z_{k}},
  \label{eq:L-addrep}\\
  \intertext{and \emph{multiplicative}:}
  \mathbf{L}(z) &= \prod_{k}\left(\mathbf{A} + \frac{\mathbf{G}_{k}}{z-z_{k}} \right),
  \label{eq:L-mult}
\end{align}
  where $\mathbf{G}_{k}$ is a matrix of rank one and $\mathbf{A}^{k} = \mathbf{L}_{0}$; we 
  are mainly interested in the multiplicative representation. We call the factors
  $\displaystyle\mathbf{B}_{i}^{\mathbf{A}}(z) = 
  \left( \mathbf{A} + \frac{\mathbf{G}_{i}}{z-z_{i}}\right)$ in the 
  multiplicative representation the \emph{elementary divisors} of $\mathbf{L}(z)$. Note that 
  the ordering of the poles determines the ordering of the factors in the multiplicative
  representation, which is what we need to define the re-factorization map.
  
  In this paper we restrict our attention to the two-pole case. However, since any permutation
  is a composition of elementary transposition, any re-factorization transformation
  is generated by a sequence of transformations that we consider below. Thus, we expect our results to 
  hold in the general case as well, but this question will be considered elsewhere.  

\subsection{The re-factorization transformation and the isospectral discrete dynamical 
system}\label{ssec:the_configuration_space} 
Let us now fix the divisor $\mathcal{D} = z_{1} + z_{2} - 
\zeta_{1} - \zeta_{2}$, where all four points are finite and distinct, and consider the 
re-factorization map $R: \mathcal{M}_{r}^{D}\to \mathcal{M}_{r}^{D}$ given by 
\begin{equation*}
\mathbf{L}(z) = \mathbf{B}_{1}^{\mathbf{A}}(z) \mathbf{B}_{2}^{\mathbf{A}}(z) \mapsto
\tilde{\mathbf{L}}(z) = \mathbf{B}_{2}^{\mathbf{A}}(z) \mathbf{B}_{1}^{\mathbf{A}}(z)  
= \tilde{\mathbf{B}}_{1}^{\mathbf{A}}(z) \tilde{\mathbf{B}}_{2}^{\mathbf{A}}(z).
\end{equation*}
We want to determine whether there is a natural discrete
dynamical system for which this map is a discrete Lax pair representation, and if so, what 
is the Lagrangian of this system. Note that this setting is rather general, since no 
restrictions on the rank $r$ are imposed. First it is necessary to identify 
a configuration space $\mathcal{Q}$ such that there is a parameterization map
$\eta: \mathcal{Q}\times \mathcal{Q}\to \mathcal{M}_{r}^{\mathcal{D}}$ satisfying the following
diagram:
\begin{equation*}
\xymatrix{(\undertilde{\mathbf{Q}},\mathbf{Q})\ar[d]_{\eta} \ar[r]^{\Phi} & 
  (\mathbf{Q},\tilde{\mathbf{Q}})\ar[d]^{\eta} \\ 
  \mathbf{L}(z) = \undertilde{\mathbf{B}}_{2}^{\mathbf{A}}(z) 
  \undertilde{\mathbf{B}}_{1}^{\mathbf{A}}(z) = \mathbf{B}_{1}^{\mathbf{A}}(z)
  \mathbf{B}_{2}^{\mathbf{A}}(z) \ar[r]^{R}& 
  \tilde{\mathbf{L}}(z) = \mathbf{B}_{2}^{\mathbf{A}}(z)
  \mathbf{B}_{1}^{\mathbf{A}}(z) = \tilde{\mathbf{B}}_{1}^{\mathbf{A}}(z) 
  \tilde{\mathbf{B}}_{2}^{\mathbf{A}}(z)\ .}
\end{equation*}
From this diagram it is clear that half the data in $\mathbf{L}(z)$ should come from 
$\undertilde{\mathbf{\mathbf{Q}}}$ and half should come from $\mathbf{Q}$. Moreover, 
this data should be of the same type to be compatible with the shift map $\Phi$. We know that 
the elementary divisors $\mathbf{B}^{\mathbf{A}}_{i}(z)$ completely determine $\mathbf{L}(z)$, 
and each elementary divisor is in turn determined by the rank-one matrix $\mathbf{G}_{i} = 
\mathbf{p}_{i}\mathbf{q}^{\dag}_{i}$, where $\mathbf{p}_{i}$ and $\mathbf{q}^{\dag}_{i}$ are
defined up to a common scaling constant, and this constant can be recovered from the divisor 
$\mathcal{D}$. Thus, $\mathbf{L}(z)$ is completely determined by either 
$\undertilde{\mathbf{p}}_{1}$, 
$\undertilde{\mathbf{p}}_{2}$, $\undertilde{\mathbf{q}}^{\dag}_{1}$, 
$\undertilde{\mathbf{q}}^{\dag}_{2}$ or $\mathbf{p}_{1}$, 
$\mathbf{p}_{2}$, $\mathbf{q}^{\dag}_{1}$, $\mathbf{q}^{\dag}_{2}$. In view of that
 we take $\undertilde{\mathbf{Q}}=(\undertilde{\mathbf{p}}_{1}, 
\undertilde{\mathbf{q}}^{\dag}_{2})$, $\mathbf{Q} = (\mathbf{p}_{1},\mathbf{q}^{\dag}_{2})$.
A priori $\mathbf{Q}\in\mathbb{C}^{r}\times (\mathbb{C}^{r})^{\dag}$, but in fact the resulting
expressions are homogeneous in $\mathbf{p}_{i}$, $\mathbf{q}^{\dag}_{i}$, 
and so the correct configuration space is 
$\mathcal{Q}=\mathbb{P}^{r-1}\times (\mathbb{P}^{r-1})^{\dag}$. 
In Theorem~\ref{thm:EQ-M} we give an explicit description of the
parameterization map $\eta: \mathcal{Q}\times \mathcal{Q} \to \mathcal{M}^{\mathcal{D}}_{r}$,
compute the corresponding equations of motion, and show that these
equations of motion are the discrete Euler-Lagrange equations with the Lagrangian function 
  $\mathcal{L}$ given by
  \begin{align*}
    \mathcal{L}(\mathbf{X},\mathbf{Y}) &= 
    (z_{2} - z_{1}) \log(\mathbf{x}^{\dag}_{2} \mathbf{x}_{1}) + 
    (z_{1} - \zeta_{2}) \log(\mathbf{x}^{\dag}_{2} \mathbf{A}^{-1} \mathbf{y}_{1}) + \\ &\qquad
    (\zeta_{2} - \zeta_{1})\log(\mathbf{y}^{\dag}_{2} \mathbf{A}^{-2} \mathbf{y}_{1}) + 
    (\zeta_{1} - z_{2}) \log(\mathbf{y}^{\dag}_{2}\mathbf{A}^{-1} \mathbf{x}_{1}).
  \end{align*}

\subsection{Discrete Painlev\'e equations and the isomonodromic 
transformations of the systems of linear difference equations}
\label{ssec:discrete_painlev_e_equations} 
Another natural discrete dynamics that can be considered on our 
space of matrices is the \emph{isomonodromic} discrete dynamical system. One of the reasons this
system is interesting is its relationship to the theory of the 
discrete Painlev\'e equations.

The recent surge of interest in the discrete version of the 
famous Painlev\'e equations is in part 
due to the fact that these equations appear in the calculation of \emph{discrete gap
probabilities} in the theory of \emph{(determinantal) Random Point
Processes}, \cite{Bor:2003:DPDPE, BorDei:2002:FDJTRT}. In addition, 
H.~Sakai in \cite{Sak:2001:RSAWARSGPE} described a very 
elegant and purely geometric approach to the discrete 
Painlev\'e equations using the Cremona action on the algebraic surfaces. More information 
about the current progress in the theory of discrete Painlev\'e equations can be found  in
\cite{GraRam:2004:DPER}. 

In the continuous case there is a well-known relationship between the 
isomonodromic transformations of the flat meromorphic connections on the 
Riemann sphere and the Painlev\'e equations. Thus, it is natural to expect that the
discrete Painlev\'e equations should be related to the
isomonodromic deformations of \emph{matrix linear difference
  equations}. However, there is a serious obstacle --- the notion of
monodromy for a \emph{differential} equation has \emph{no obvious
  generalization} to a \emph{difference} equation, and only recently some
  significant progress was made in this direction. Recall that the general theory of
matrix linear differential equations
\begin{equation*}
  \label{eq:dEQ}
  \mathbf{\Psi}(z+1) = \mathbf{L}(z)\mathbf{\Psi}(z)
\end{equation*}
goes back the works of George Birkhoff, \cite{Bir:1911:GTLDE}. First step in 
Birkhoff's approach was to use a special gauge transformation
to clear all poles of $\mathbf{L}(z)$ and make it a
\emph{polynomial} in $z$. Note that as a result we get a \emph{pole of higher order at
infinity}. Next, Birkhoff showed that there are two canonical
meromorphic solutions $\mathbf{\Psi}_{l}(z)$ and
$\mathbf{\Psi}_{r}(z)$ that have the prescribed asymptotic
behavior for $\Re(z)\ll 0$ and $\Re(z)\gg0$ respectively.
Then the analogue of the monodromy map is just the \emph{connection
  matrix} $\mathbf{C}$ of these solutions, $\mathbf{C}(z) =
\mathbf{\Psi}_{r}^{-1}(z)\mathbf{\Psi}_{l}(z)$. Birkhoff
also showed that in this situation the \emph{isomonodromic transformations}
$\mathbf{L}(z)\mapsto \tilde{\mathbf{L}}(z)$ that preserve
$\mathbf{C}(z)$  are given by 
\begin{equation}
  \label{eq:dIsom}
  \tilde{\mathbf{L}}(z) =
\mathbf{R}(z+1) \mathbf{L}(z)\mathbf{R}^{-1}(z),
\end{equation}
where
$\mathbf{R}(z)$ is a \emph{rational} matrix. 
A.~Borodin, in \cite{Bor:2004:ITLSDE}, constructed a general theory of
such transformations for polynomial $\mathbf{L}(z)$ and showed that it give rise to the difference
\emph{Schlesinger equations}. These equations, when the 
the space of parameters is two-dimensional, can in turn be reduced to the
difference Painlev\'e equations. In a follow-up paper
\cite{AriBor:2006:MSDDPE}, D.~Arinkin and A.~Borodin showed, using a more
geometric language of $d$-connections,
that for some special cases Sakai's surfaces can be
identified with the moduli space of such $d$-connections and that the isomonodromic
transformations can then be though of as the \emph{elementary modifications} of
$d$-connections, which are in turn given by the difference Painlev\'e
equations (examples considered in this paper are dPV and
dPVI). This result helps to explain the geometry behind the
difference \emph{isomonodromy}--\emph{Painlev\'e} correspondence.
Recently, Arinkin and Borodin found the description for the $\tau$-function 
of the discrete isomonodromy transformations for both polynomial and rational cases, 
see \cite{AriBor:2007:TDITP}.

A different approach to the notion of the monodromy of a
linear \emph{difference} equation  was  suggested by
I.~Krichever in \cite{Kri:2004:ATDEWRECRP}. In this approach the matrix
$\mathbf{L}(z)$ belongs to the same anzats as we consider in the present paper --- 
it is \emph{regular} at infinity, all of its
poles $z_{i}$ are \emph{finite and simple}, and 
$\operatorname{res}_{z_{i}}\mathbf{L}(z)$ are of rank
\emph{one}. For such matrices Krichever introduced the notion of a
\emph{local monodromy} that can be thought of as a monodromy
corresponding to the path around a pole, and also constructed the
isomonodromy transformations, that again have the form~(\ref{eq:dIsom}). 
He also showed how to generalize this theory
from rational to \emph{elliptic} functions.

In the present paper we consider a special case
of the transformation~(\ref{eq:dIsom}) that has the form
\begin{equation}
  \mathbf{L}(z)=\mathbf{B}^{\mathbf{A}}_{1}(z) \mathbf{B}^{\mathbf{A}}_{2}(z) \mapsto
  \tilde{\mathbf{L}}(z) = \tilde{\mathbf{B}}^{\mathbf{A}}_{1}(z) 
  \tilde{\mathbf{B}}^{\mathbf{A}}_{2}(z) 
  =\mathbf{B}^{\mathbf{A}}_{2}(z+1) \mathbf{B}^{\mathbf{A}}_{1}(z) = 
  \mathbf{B}^{\mathbf{A}}_{2}(z+1) \mathbf{L}(z) (\mathbf{B}^{\mathbf{A}}_{2}(z))^{-1},
    \label{eq:dIsomSp}
\end{equation}
where $\mathbf{B}^{\mathbf{A}}_{i}(z)$ are the elementary divisors defined earlier.
We show that, similarly to the isospectral case, these transformations can be written in the 
Lagrangian form. The main new feature of the isomonodromic approach is the fact that
such transformation changes the divisor $D$ to the divisor $\tilde{D}$, where 
$\tilde{z}_{1}=z_{1}$, $\tilde{\zeta}_{1} = \zeta_{1}$, $\tilde{z}_{2} = z_{2} - 1$, and
$\tilde{\zeta}_{2} = \zeta_{2} - 1$. Thus we need to make the Lagrangian $\mathcal{L}$
time-dependent by putting $z_{2}(t) = z_{2} - t$ and $\zeta_{2}(t) = \zeta_{2} - t$: 
\begin{align*}
  \mathcal{L}(\mathbf{X},\mathbf{Y},t) &= 
  (z_{2}(t) - z_{1}) \log(\mathbf{x}^{\dag}_{2} \mathbf{x}_{1}) + 
  (z_{1} - \zeta_{2}(t)) \log(\mathbf{x}^{\dag}_{2} \mathbf{A}^{-1} \mathbf{y}_{1}) + \\ &\qquad
  (\zeta_{2}(t) - \zeta_{1})\log(\mathbf{y}^{\dag}_{2} \mathbf{A}^{-2} \mathbf{y}_{1}) + 
  (\zeta_{1} - z_{2}(t)) \log(\mathbf{y}^{\dag}_{2}\mathbf{A}^{-1} \mathbf{x}_{1}).
\end{align*}
The time-dependent discrete Euler-Lagrange equations 
\begin{equation*}
  \frac{\partial \mathcal{L}}{\partial \mathbf{Y}} (\mathbf{Q}_{k-1},\mathbf{Q}_{k},k-1) + 
  \frac{\partial \mathcal{L}}{\partial \mathbf{X}} (\mathbf{Q}_{k},\mathbf{Q}_{k+1},k) = 0
  \label{eq:discrEL-time}
\end{equation*}
then describe the isomonodromic dynamics~(\ref{eq:dIsomSp}). Finally, we verify, 
essentially following \cite{AriBor:2006:MSDDPE}, that in 
the rank-two case equation~(\ref{eq:dIsomSp}),
when written in the so-called \emph{spectral coordinates}, reduces to the difference 
Painlev\'e equation dPV of the Sakai's hierarchy, thus establishing that this equation 
can be written in the Lagrangian form.


\subsection{Organization of the paper}\label{ssec:organization_of_the_papaer} 
In Section~\ref{sec:el_divs} we study properties the elementary divisors, and 
obtain the  description of the re-factorization map. 
In Section~\ref{sec:isosp_dyn} we establish the Lagrangian structure of the isospectral dynamics, 
and in Section~\ref{sec:isom_dyn} we extend this result to the isomonodromic case. 


\section{Elementary Divisors}\label{sec:el_divs} 
In representing rational matrix functions in the multiplicative form we take each factor to be
a matrix of the following simple type.
\begin{definition}\label{def:el_div}
  An \emph{elementary divisor} with the simple pole at $z_{i}$
  is a matrix of the form $\mathbf{B}_{i}^{\mathbf{A}}(z) = \mathbf{A} + 
  \dfrac{\mathbf{G}_{i}}{z-z_{i}}$, where $\mathbf{G}_{i}$ is a matrix of rank one and 
   $\mathbf{A}$ is some \emph{fixed} constant non-degenerate matrix (which is usually taken
   to be diagonal). %
\end{definition}

In this section we describe certain useful properties of elementary divisors, and also explain
our normalization conventions.

\subsection{Rank-one matrices and normalization}\label{ssec:rank_one} 
Let us first make some remarks about matrices of rank one. Any such matrix has the form 
$\mathbf{G} = \mathbf{p}\mathbf{q}^{\dag}$ for some column vector $\mathbf{p}$ and some
row vector $\mathbf{q}^{\dag}$, where the vectors $\mathbf{p}$ and $\mathbf{q}^{\dag}$  
are defined up to a common scaling constant. To explicitly keep track of such scaling 
constants during computations we need to normalize these vectors in some way. 

\begin{notation}
  Given the choice of a normalization, we denote by $[\mathbf{v}]$ the normalization
  of a vector $\mathbf{v}$ and by $\nu(\mathbf{v})$ its \emph{normalization constant} 
  w.r.t. this normalization. Thus, 
  $\mathbf{v} = \nu(\mathbf{v})[\mathbf{v}]$. We also use the notation $[\mathbf{v}]$ 
  for the normalized vectors. Hence, any matrix of rank one can be written as 
  $\mathbf{G}=\mathbf{p}\mathbf{q}^{\dag} = \nu(\mathbf{G})[\mathbf{p}] [\mathbf{q}^{\dag}]
   = \lambda [\mathbf{p}][\mathbf{q}^{\dag}]$,
  where $\lambda = \nu(\mathbf{G}) = \nu(\mathbf{p})\nu(\mathbf{q}^{\dag})$ is  the 
  \emph{normalization constant} for $\mathbf{G}$.
\end{notation}

For our purposes it is most convenient to work with \emph{linear normalizations}. 
Such normalizations have the property that any linear relation among the normalized
vectors implies the same linear relation for the coefficients; 
if $a [\mathbf{u}] = \sum_{k} b_{k} [\mathbf{v}_{k}]$, then
$a = \sum_{k} b_{k}$. For example, the normalizations
\begin{enumerate}[(a)]
\item $\sum_{i}(\mathbf{p})^{i}=\sum_{j}(\mathbf{q}^{\dag})_{j}=1$,
\item $(\mathbf{p})^{i}=(\mathbf{q}^{\dag})_{j}=1$ for some choice of
  indexes $i$ and $j$
\end{enumerate}
satisfy this requirement. 

For an elementary divisor we can use one of the following three natural 
normalizations: 
 \begin{equation*}
   \mathbf{B}_{i}^{\mathbf{A}}(z) = \mathbf{A} + \frac{\lambda_{i}[\mathbf{p}_i] 
   [\mathbf{q}^{\dag}_i]}{z-z_i} = \mathbf{A} \left( \mathbf{1} +
   \frac{\lambda_i^{\mathbf{p}} [\mathbf{A}^{-1} \mathbf{p}_i] 
   [\mathbf{q}^{\dag}_i]}{z-z_{i}}\right) = \left( \mathbf{1} + 
   \frac{\lambda_{i}^{\mathbf{q}^{\dag}} 
   [\mathbf{p}_{i}][\mathbf{q}^{\dag}_{i} \mathbf{A}^{-1}]}{z-z_{i}}\right) \mathbf{A},
 \end{equation*}
   where the superscript $\mathbf{p}$ in $\lambda_{i}^{\mathbf{p}}$ indicates that instead 
   of normalizing $\mathbf{p}$ we normalize $\mathbf{A}^{-1}\mathbf{p}$; similarly,
   $\lambda_{i}^{\mathbf{q}^{\dag}}$ corresponds to the normalization of $\mathbf{q}^{\dag}_i 
   \mathbf{A}^{-1}$.

\subsection{Properties of elementary divisors}\label{ssec:prop_el_divs} 
The following Lemma is a key technical tool for working with elementary divisors.

\begin{lemma}\label{lem:el_divs}
Let $\mathbf{B}_{i}^\mathbf{A}(z) = \mathbf{A} + \dfrac{\mathbf{G}_{i}}{z-z_{i}}$ 
and define $\zeta_{i}$ by
the equation $\operatorname{tr}(\mathbf{G}_{i} \mathbf{A}^{-1} ) = z_{i} - \zeta_{i}$. 
Then the following holds.
\begin{enumerate}[(i)]
  \item $\det \mathbf{B}_{i}^{\mathbf{A}}(z) = \dfrac{z-\zeta_{i}}{z-z_{i}} \det{\mathbf{A}} $ and 
   $(\mathbf{B}_{i}^{\mathbf{A}}(z))^{-1} = 
    \mathbf{A^{-1}}\left( \mathbf{A} - 
     \dfrac{\mathbf{G}_{i}}{z-\zeta_{i}}\right)\mathbf{A}^{-1}$.
  \item Knowing how $\mathbf{B}_{i}^{\mathbf{A}}(z)$ operates on row 
    (resp.~column) vectors and also knowing the column (resp.~row) 
    vector of the rank one part allows us to determine 
    $\mathbf{B}_{i}^{\mathbf{A}}(z)$:
     \begin{itemize}
     \item if $\mathbf{v} = \mathbf{B}_{i}^{\mathbf{A}}(z) \mathbf{w} $, then 
       $\mathbf{G}_{i} =
       \mathbf{A} \left((z_{i} - z)\dfrac{\mathbf{w}
       }{\mathbf{q}^{\dag}_{i} \mathbf{w}} + (z - \zeta_{i}) 
       \dfrac{\mathbf{A}^{-1}\mathbf{v}
       }{\mathbf{q}^{\dag}_{i} \mathbf{A}^{-1} \mathbf{v}}\right)  
       \mathbf{q}^{\dag}_{i}$;
     \item if $\mathbf{v}^{\dag} = \mathbf{w}^{\dag}
     \mathbf{B}_{i}^{\mathbf{A}}(z)$, then
       $\mathbf{G}_{i} = 
       \mathbf{p}_{i} \left((z_{i} - z)\dfrac{\mathbf{w}^{\dag}_{i}
       }{\mathbf{w}^{\dag} \mathbf{p}_{i}} + (z - \zeta_{i}) 
       \dfrac{\mathbf{v}^{\dag} \mathbf{A}^{-1}
       }{\mathbf{v}^{\dag}\mathbf{A}^{-1}\mathbf{p}_{i}}\right) \mathbf{A}$.
    \end{itemize}
\end{enumerate}
\end{lemma}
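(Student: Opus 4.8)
The plan is to prove all four assertions by direct computation, exploiting the rank-one structure of $\mathbf{G}_i$ throughout. Write $\mathbf{G}_i = \mathbf{p}_i \mathbf{q}_i^{\dag}$ once and for all; then every matrix in sight is a rank-one perturbation of the fixed invertible $\mathbf{A}$, and the standard rank-one identities (the Sherman--Morrison formula and the determinant lemma $\det(\mathbf{A} + \mathbf{p}\mathbf{q}^{\dag}) = \det\mathbf{A}\,(1 + \mathbf{q}^{\dag}\mathbf{A}^{-1}\mathbf{p})$) will do most of the work. The only genuinely new ingredient is bookkeeping: keeping the spectral variable $z$, the pole $z_i$, and the auxiliary zero $\zeta_i$ straight.

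For part (i), I would factor $\mathbf{B}_i^{\mathbf{A}}(z) = \mathbf{A}\bigl(\mathbf{1} + \tfrac{1}{z-z_i}\mathbf{A}^{-1}\mathbf{G}_i\bigr)$ and apply the determinant lemma to the bracket: this gives $\det\mathbf{B}_i^{\mathbf{A}}(z) = \det\mathbf{A}\,\bigl(1 + \tfrac{1}{z-z_i}\operatorname{tr}(\mathbf{A}^{-1}\mathbf{G}_i)\bigr)$. Since $\operatorname{tr}(\mathbf{A}^{-1}\mathbf{G}_i) = \operatorname{tr}(\mathbf{G}_i\mathbf{A}^{-1}) = z_i - \zeta_i$ by the defining equation, the bracket becomes $1 + \tfrac{z_i - \zeta_i}{z - z_i} = \tfrac{z - \zeta_i}{z - z_i}$, which is the claimed determinant. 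For the inverse, I would simply verify that $\mathbf{A}^{-1}\bigl(\mathbf{A} - \tfrac{\mathbf{G}_i}{z - \zeta_i}\bigr)\mathbf{A}^{-1}$ multiplied by $\mathbf{A} + \tfrac{\mathbf{G}_i}{z - z_i}$ gives $\mathbf{1}$: expanding, the cross terms produce $\tfrac{1}{z-z_i}\mathbf{A}^{-1}\mathbf{G}_i - \tfrac{1}{z-\zeta_i}\mathbf{A}^{-1}\mathbf{G}_i - \tfrac{1}{(z-z_i)(z-\zeta_i)}\mathbf{A}^{-1}\mathbf{G}_i\mathbf{A}^{-1}\mathbf{G}_i$, and using $\mathbf{G}_i\mathbf{A}^{-1}\mathbf{G}_i = \mathbf{p}_i(\mathbf{q}_i^{\dag}\mathbf{A}^{-1}\mathbf{p}_i)\mathbf{q}_i^{\dag} = (z_i - \zeta_i)\mathbf{G}_i$ the three terms telescope to zero; this is the step where the rank-one reduction $\mathbf{q}_i^{\dag}\mathbf{A}^{-1}\mathbf{p}_i = \operatorname{tr}(\mathbf{G}_i\mathbf{A}^{-1}) = z_i - \zeta_i$ is essential.

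For part (ii), consider the column-vector case $\mathbf{v} = \mathbf{B}_i^{\mathbf{A}}(z)\mathbf{w}$, i.e. $\mathbf{v} = \mathbf{A}\mathbf{w} + \tfrac{1}{z - z_i}\mathbf{p}_i(\mathbf{q}_i^{\dag}\mathbf{w})$. This identifies $\mathbf{p}_i$ (up to scale) as a combination of $\mathbf{v}$ and $\mathbf{A}\mathbf{w}$: solving, $\mathbf{p}_i = \tfrac{z - z_i}{\mathbf{q}_i^{\dag}\mathbf{w}}(\mathbf{v} - \mathbf{A}\mathbf{w})$. Then $\mathbf{G}_i = \mathbf{p}_i\mathbf{q}_i^{\dag}$, and I would massage $\mathbf{v} - \mathbf{A}\mathbf{w}$ into the stated symmetric form by pulling out $\mathbf{A}$ and using the two normalization denominators $\mathbf{q}_i^{\dag}\mathbf{w}$ and $\mathbf{q}_i^{\dag}\mathbf{A}^{-1}\mathbf{v}$; the coefficient check again invokes $\mathbf{q}_i^{\dag}\mathbf{A}^{-1}\mathbf{p}_i = z_i - \zeta_i$, which forces the weights $(z_i - z)$ and $(z - \zeta_i)$ (note they sum to $z_i - \zeta_i$, matching the trace). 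The row-vector case is the transpose of this argument applied to $\mathbf{w}^{\dag}\mathbf{B}_i^{\mathbf{A}}(z) = \mathbf{w}^{\dag}\mathbf{A} + \tfrac{1}{z-z_i}(\mathbf{w}^{\dag}\mathbf{p}_i)\mathbf{q}_i^{\dag}$, now solving for $\mathbf{q}_i^{\dag}$ in terms of $\mathbf{v}^{\dag}$ and $\mathbf{w}^{\dag}\mathbf{A}$.

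The main obstacle is not conceptual but presentational: there are several equivalent ways to write each rank-one expression (using $\mathbf{p}_i$, $\mathbf{A}^{-1}\mathbf{p}_i$, or their normalized versions), and one must consistently track which normalization is in force so that the scalar denominators $\mathbf{q}_i^{\dag}\mathbf{w}$, $\mathbf{q}_i^{\dag}\mathbf{A}^{-1}\mathbf{v}$, etc.\ come out in exactly the displayed positions. I would therefore fix the convention $\mathbf{G}_i = \mathbf{p}_i\mathbf{q}_i^{\dag}$ at the outset, do every manipulation with unnormalized vectors, and only at the end observe that the result is manifestly invariant under the rescaling $\mathbf{p}_i \mapsto c\mathbf{p}_i$, $\mathbf{q}_i^{\dag} \mapsto c^{-1}\mathbf{q}_i^{\dag}$ — so the formula is well-defined on the rank-one matrix $\mathbf{G}_i$ itself. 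I expect each of the four verifications to be a two- or three-line computation once the rank-one collapse $\mathbf{q}_i^{\dag}\mathbf{A}^{-1}\mathbf{p}_i = z_i - \zeta_i$ is in hand.
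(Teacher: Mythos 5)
Your proposal is correct and follows essentially the same route as the paper: part (i) via the rank-one determinant identity and the trace relation $\operatorname{tr}(\mathbf{G}_{i}\mathbf{A}^{-1})=z_{i}-\zeta_{i}$ (plus direct verification of the inverse), and part (ii) by pairing the equation $\mathbf{v}=\mathbf{B}_{i}^{\mathbf{A}}(z)\mathbf{w}$ with $\mathbf{q}_{i}^{\dag}$ to relate the denominators $\mathbf{q}_{i}^{\dag}\mathbf{w}$ and $\mathbf{q}_{i}^{\dag}\mathbf{A}^{-1}\mathbf{v}$ through the same trace identity. The only difference is presentational: you work with unnormalized vectors and note scaling invariance at the end, whereas the paper phrases the identical computation in its normalization-bracket calculus.
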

\begin{proof} 
  To prove part (i), note that 
  \begin{equation*}
    \det \mathbf{B}_{i}^{\mathbf{A}}(z)  = \det \left( \mathbf{1} + 
    \frac{\mathbf{G}_{i} \mathbf{A}^{-1}}{(z-z_{i})} 
    \right) \det{\mathbf{A}} = \left( 1 + 
    \frac{\operatorname{tr}(\mathbf{G}_{i}\mathbf{A}^{-1})}{(z-z_{i})}\right) \det\mathbf{A} = 
    \frac{z-\zeta_{i}}{z-z_{i}} \det{\mathbf{A}},
  \end{equation*}
  since $\mathbf{G}_{i} \mathbf{A}^{-1}$ is a matrix of rank one. The formula for the inverse 
  matrix can be checked by the direct calculation.
  
  To establish part (ii) we normalize the elementary divisor. Then, 
  using the linearity property of the normalization, the equation 
    $\mathbf{v} = \mathbf{B}_{i}^{\mathbf{A}}(z) \mathbf{w}$ can be written as
    \begin{align*}
      [\mathbf{A}^{-1}\mathbf{v}] &= \left[ [\mathbf{w}] + \frac{\lambda_{i}^{\mathbf{p}} 
      [\mathbf{A}^{-1} \mathbf{p}_{i}] [\mathbf{q}^{\dag}_{i}][\mathbf{w}]}{z-z_{i}} \right] 
      = \frac{[\mathbf{w}](z-z_{i}) + \lambda_{i}^{\mathbf{p}} [\mathbf{A}^{-1} \mathbf{p}_{i}]
      [\mathbf{q}^{\dag}_{i}] [\mathbf{w}]}{(z-z_{i}) + \lambda_{i}^\mathbf{p} 
      [\mathbf{q}^{\dag}_i][\mathbf{w}]}\\
      \intertext{Multiplying both sides by $[\mathbf{q}^{\dag}_{i}]$ gives 
       $(z-z_{i}) + \lambda_{i}^\mathbf{p} [\mathbf{q}^{\dag}_i][\mathbf{w}] 
       = \dfrac{[\mathbf{q}^{\dag}_{i}][\mathbf{w}] (z-\zeta_{i})}{
       [\mathbf{q}^{\dag}_i][\mathbf{A}^{-1} \mathbf{v}]}$, and so}
      \mathbf{G}_{i} &= 
      \mathbf{A} \left( 
      \lambda_{i}^{\mathbf{p}} [\mathbf{A}^{-1}\mathbf{p}_{i}][\mathbf{q}^{\dag}_{i}] \right) = 
      \mathbf{A} \left(\frac{(z_{i} - z) [\mathbf{w}] [\mathbf{q}^{\dag}_{i}]}{ [\mathbf{w}] 
      [\mathbf{q}^{\dag}_{i}]} + 
      \frac{(z-\zeta_{i})[\mathbf{A}^{-1} \mathbf{v}] [\mathbf{q}^{\dag}_{i}]}{
      [\mathbf{q}^{\dag}_{i}] [\mathbf{A}^{-1} \mathbf{v}]}\right).
    \end{align*}
    Since the expression in the parentheses is homogeneous, we can remove the normalization 
    brackets to get the desired result. Second formula is obtained in a similar way.
\end{proof}

\subsection{The re-factorization transformation}\label{sub:re_fact} 

Consider now the following question. Let
\begin{equation*}
  \mathbf{B}_{2}^{\mathbf{A}}(z) \mathbf{B}_{1}^{\mathbf{A}}(z) = 
  \widetilde{\mathbf{B}}_{1}^{\mathbf{A}}(z) \widetilde{\mathbf{B}}_{2}^{\mathbf{A}}(z),
\label{eq:full-flip}
\end{equation*}
where $\mathbf{B}_{i}^{\mathbf{A}}(z)$ and $\widetilde{\mathbf{B}}_{i}^{\mathbf{A}}(z)$ 
are elementary divisors with the simple poles at $z_{i}$. What are the 
relationships between the vectors that form their rank-one parts? 
To begin with, note that taking the determinant results in the
 equation $\displaystyle\frac{(z-\zeta_{i})(z-\zeta_{2})}{(z-z_{i})(z-z_{2})} 
 = \frac{(z-\tilde{\zeta_{1}})(z-\tilde{\zeta_{2}})}{(z-\tilde{z_{1}})(z-\tilde{z}_{2})}$, where
 $\tilde{z_{i}} = z_{i}$ by definition, and so we must have either $\tilde{\zeta_{i}} = 
 \zeta_{i}$ (the general case) or $\tilde{\zeta_{1}} = \zeta_{2}$ and $\tilde{\zeta_{2}} = 
 \zeta_{1}$ (which is a special case, since it requires a non-trivial relationship between
 the poles and the rank-one parts of the elementary divisors,
 \begin{equation}
  z_{2} - z_{1} = \operatorname{tr}(\tilde{\mathbf{G}}_{2}\mathbf{A}^{-1}) - 
  \operatorname{tr}(\mathbf{G}_{1}\mathbf{A}^{-1}) = 
  \operatorname{tr}(\mathbf{G}_{2}\mathbf{A}^{-1}) - 
  \operatorname{tr}(\tilde{\mathbf{G}}_{1}\mathbf{A}^{-1}).
 \end{equation}
  The Theorem below explains the general case, and the special case can be considered 
 in  exactly the same way.

\begin{theorem}\label{thm:flip}
  Let $\mathbf{B}_{2}^{\mathbf{A}}(z) \mathbf{B}_{1}^{\mathbf{A}}(z) = 
  \widetilde{\mathbf{B}}_{1}^{\mathbf{A}}(z) \widetilde{\mathbf{B}}_{2}^{\mathbf{A}}(z)$
  and $\tilde{\zeta}_{i} = \zeta_{i}$. Then the following holds.
  \begin{enumerate}[(i)]
    \item The vectors $\mathbf{p}_{i}$, $\tilde{\mathbf{p}}_{i}$, $\mathbf{q}^{\dag}_{i}$,
    $\tilde{\mathbf{q}}^{\dag}_{i}$ are related by 
    \begin{alignat*}{4}
      [\tilde{\mathbf{p}}_{1}] &= \big[\mathbf{B}_{2}^{\mathbf{A}}(z_{1}) \mathbf{p}_{1}\big] & 
      &= \big[ \mathbf{A} \widetilde{\mathbf{B}}_{2}^{\mathbf{A}}(\zeta_{1}) \mathbf{A}^{-1} 
      \mathbf{p}_{1}\big]  \qquad & \qquad
      [\mathbf{p}_{2}] &= \big[ \widetilde{\mathbf{B}}_{1}^{\mathbf{A}}(z_{2}) 
      \tilde{\mathbf{p}}_{2}\big] &
      &= \big[ \mathbf{A} \mathbf{B}_{1}^{\mathbf{A}}(\zeta_{2})
      \mathbf{A}^{-1}\tilde{\mathbf{p}}_{2} \big] \\
      [\mathbf{q}^{\dag}_{1}] &= \big[\tilde{\mathbf{q}}^{\dag}_{1} 
      \widetilde{\mathbf{B}}_{2}^{\mathbf{A}}(z_{1})\big] & 
      &= \big[\tilde{\mathbf{q}}^{\dag}_{1} \mathbf{A}^{-1} 
      \mathbf{B}_{2}^{\mathbf{A}}(\zeta_{1}) \mathbf{A}\big] 
      \qquad & \qquad
      [\tilde{\mathbf{q}}^{\dag}_{2}] &= \big[\mathbf{q}^{\dag}_{2} 
      \mathbf{B}_{1}^{\mathbf{A}}(z_{2})\big] &
      &= \big[\mathbf{q}^{\dag}_{2} \mathbf{A}^{-1} 
      \widetilde{\mathbf{B}}_{1}^{\mathbf{A}}(\zeta_{2}) \mathbf{A}\big]. 
    \end{alignat*}
    \item The vectors $\mathbf{p}_{1}$, $\tilde{\mathbf{p}}_{1}$, $\mathbf{q}^{\dag}_{2}$,
    $\tilde{\mathbf{q}}^{\dag}_{2}$ completely determine the elementary divisors 
    $\mathbf{B}_{i}^{\mathbf{A}}(z)$, $\tilde{\mathbf{B}}_{i}^{\mathbf{A}}(z)$ via
    \begin{align}
      \mathbf{G}_{1}  &= \left( (z_{1} - z_{2}) 
       \frac{\mathbf{p}_{1}\mathbf{q}^{\dag}_{2}}{\mathbf{q}^{\dag}_{2} \mathbf{p}_{1}} + 
       (z_{2} - \zeta_{1}) \frac{\mathbf{p}_{1}\tilde{\mathbf{q}}^{\dag}_{2} \mathbf{A}^{-1}
       }{\tilde{\mathbf{q}}^{\dag}_{2}\mathbf{A}^{-1}\mathbf{p}_{1}}\right)\mathbf{A},
       \label{eq:q1_ptp1qtq2} \\
       \tilde{\mathbf{G}}_{1} &= \left(
       (z_{1} - \zeta_{2}) \frac{\tilde{\mathbf{p}}_{1}\mathbf{q}^{\dag}_{2} \mathbf{A}^{-1}
       }{\mathbf{q}^{\dag}_{2} \mathbf{A}^{-1} \tilde{\mathbf{p}}_{1}} + 
       (\zeta_{2} - \zeta_{1}) \frac{\tilde{\mathbf{p}}_{1} \tilde{\mathbf{q}}^{\dag}_{2}
        \mathbf{A}^{-2}
       }{\tilde{\mathbf{q}}^{\dag}_{2} \mathbf{A}^{-2} \tilde{\mathbf{p}}_{1}}\right)\mathbf{A}.
       \label{eq:tq1_ptp1qtq2} \\
      \mathbf{G}_{2}  &= \mathbf{A} \left( (z_{2}- z_{1}) 
      \frac{\mathbf{p}_{1} \mathbf{q}^{\dag}_{2}}{\mathbf{q}^{\dag}_{2} \mathbf{p}_{1}} + 
      (z_{1} - \zeta_{2}) \frac{\mathbf{A}^{-1} \tilde{\mathbf{p}}_{1} \mathbf{q}^{\dag}_{2}
      }{\mathbf{q}^{\dag}_{2} \mathbf{A}^{-1} \tilde{\mathbf{p}}_{1}}\right), 
      \label{eq:p2_ptp1qtq2}\\
      \tilde{\mathbf{G}}_{2}  &= \mathbf{A} \left( 
      (z_{2} - \zeta_{1}) \frac{\mathbf{A}^{-1}\mathbf{p}_{1} \tilde{\mathbf{q}}^{\dag}_{2}
      }{\tilde{\mathbf{q}}^{\dag}_{2} \mathbf{A}^{-1} \mathbf{p}_{1}} + 
      (\zeta_{1} - \zeta_{2}) \frac{\mathbf{A}^{-2} \tilde{\mathbf{p}}_{1} 
      \tilde{\mathbf{q}}^{\dag}_{2}
      }{\tilde{\mathbf{q}}^{\dag}_{2} \mathbf{A}^{-2} \tilde{\mathbf{p}}_{1}}\right),%
      \label{eq:tp2_tpt1qtq2}
     \end{align}
  \end{enumerate}
\end{theorem}

\begin{proof}
  The equations in part (i) are 
   obtained by taking the residues of the equation $\mathbf{B}_{2}^{\mathbf{A}}(z)     
    \mathbf{B}_{1}^{\mathbf{A}}(z) = 
   \widetilde{\mathbf{B}}_{1}^{\mathbf{A}}(z) \widetilde{\mathbf{B}}_{2}^{\mathbf{A}}(z)$ 
   at the points $z_{i}$ and the residues of the inverse equation at the points $\zeta_{i}$ 
   and then equating the normalized 
   row and column vectors of the resulting rank-one matrices. 
   Using Lemma~\ref{lem:el_divs} we then obtain the equations in part (ii).
\end{proof}



\section{The Isospectral Case}\label{sec:isosp_dyn} 

We are now in the position to describe the equations of motion for the isospectral dynamics.

\begin{theorem}\label{thm:EQ-M} 
  Let $\undertilde{\mathbf{Q}} = (\undertilde{\mathbf{p}}_{1},
  \undertilde{\mathbf{q}}^{\dag}_{2})$, $\mathbf{Q} = (\mathbf{p}_{1}, \mathbf{q}^{\dag}_{2})$, 
  $\tilde{\mathbf{Q}} = (\tilde{\mathbf{p}}_{1}, \tilde{\mathbf{q}}^{\dag}_{2})$,
  where each vector is considered modulo re-scaling, 
  \begin{align*}
    \eta(\undertilde{\mathbf{Q}},\mathbf{Q}) &= 
    \mathbf{L}(z) = \undertilde{\mathbf{B}}_{2}^{\mathbf{A}}(z) 
    \undertilde{\mathbf{B}}_{1}^{\mathbf{A}}(z) = \mathbf{B}_{1}^{\mathbf{A}}(z) 
    \mathbf{B}_{2}^{\mathbf{A}}(z),
    \intertext{and}
    \eta(\mathbf{Q},\tilde{\mathbf{Q}}) &= \tilde{\mathbf{L}}(z) = 
    \mathbf{B}_{2}^{\mathbf{A}}(z)
    \mathbf{B}_{1}^{\mathbf{A}}(z) = \tilde{\mathbf{B}}_{1}^{\mathbf{A}}(z) 
    \tilde{\mathbf{B}}_{2}^{\mathbf{A}}(z).
  \end{align*}
  Then
  \begin{enumerate}[(i)]
    \item The parameterization $\eta(\undertilde{\mathbf{Q}},\mathbf{Q})$ is given by 
    \begin{align*}
      \eta(\undertilde{\mathbf{Q}},\mathbf{Q}) = \mathbf{L}(z) &= 
      \left(\mathbf{A} + \frac{1}{z-z_{1}}\left(
       (z_{1} - \zeta_{2}) \frac{\mathbf{p}_{1}\undertilde{\mathbf{q}}^{\dag}_{2} 
       }{\undertilde{\mathbf{q}}^{\dag}_{2} \mathbf{A}^{-1} \mathbf{p}}_{1} + 
       (\zeta_{2} - \zeta_{1}) \frac{\mathbf{p}_{1} \mathbf{q}^{\dag}_{2}
        \mathbf{A}^{-1}
       }{\mathbf{q}^{\dag}_{2} \mathbf{A}^{-2} \mathbf{p}_{1}}\right)\right)
       \times \\
       &\qquad \left(\mathbf{A} + \frac{1}{z-z_{2}} \left( 
       (z_{2} - \zeta_{1}) \frac{\undertilde{\mathbf{p}}_{1} \mathbf{q}^{\dag}_{2}
       }{\mathbf{q}^{\dag}_{2} \mathbf{A}^{-1} \undertilde{\mathbf{p}}_{1}} + 
       (\zeta_{1} - \zeta_{2}) \frac{\mathbf{A}^{-1} \mathbf{p}_{1} 
       \mathbf{q}^{\dag}_{2}
       }{\mathbf{q}^{\dag}_{2} \mathbf{A}^{-2} \mathbf{p}_{1}}\right)\right).
    \end{align*}
    \item The equations of motion $(\mathbf{Q},\tilde{\mathbf{Q}}) = 
    \Phi(\undertilde{\mathbf{Q}},\mathbf{Q})$ have the implicit form
    \begin{align}
      \left( 
      (z_{2} - \zeta_{1}) \frac{\mathbf{A}^{-1}\undertilde{\mathbf{p}}_{1} 
      }{\mathbf{q}^{\dag}_{2} \mathbf{A}^{-1} \undertilde{\mathbf{p}}_{1}} + 
      (\zeta_{1} - \zeta_{2}) \frac{\mathbf{A}^{-2} \mathbf{p}_{1} 
      }{\mathbf{q}^{\dag}_{2} \mathbf{A}^{-2} \mathbf{p}_{1}}\right) &= 
      \left( (z_{2}- z_{1}) 
      \frac{\mathbf{p}_{1} }{\mathbf{q}^{\dag}_{2} \mathbf{p}_{1}} + 
      (z_{1} - \zeta_{2}) \frac{\mathbf{A}^{-1} \tilde{\mathbf{p}}_{1} 
      }{\mathbf{q}^{\dag}_{2} \mathbf{A}^{-1} \tilde{\mathbf{p}}_{1}}\right)\label{eq:eqmot-p}\\
      \left(
       (z_{1} - \zeta_{2}) \frac{\undertilde{\mathbf{q}}^{\dag}_{2} \mathbf{A}^{-1}
       }{\undertilde{\mathbf{q}}^{\dag}_{2} \mathbf{A}^{-1} \mathbf{p}_{1}} + 
       (\zeta_{2} - \zeta_{1}) \frac{ \mathbf{q}^{\dag}_{2}
        \mathbf{A}^{-2}
       }{\mathbf{q}^{\dag}_{2} \mathbf{A}^{-2} \mathbf{p}_{1}}\right) &=
       \left( (z_{1} - z_{2}) 
        \frac{\mathbf{q}^{\dag}_{2}}{\mathbf{q}^{\dag}_{2} \mathbf{p}_{1}} + 
        (z_{2} - \zeta_{1}) \frac{\tilde{\mathbf{q}}^{\dag}_{2} \mathbf{A}^{-1}
        }{\tilde{\mathbf{q}}^{\dag}_{2}\mathbf{A}^{-1}\mathbf{p}_{1}}\right),\label{eq:eqmot-q}
    \end{align}
    and since we are only interested in the spaces spanned by $\tilde{\mathbf{p}}_{1}$ and
    $\tilde{\mathbf{q}}^{\dag}_{2}$, we can take $\tilde{\mathbf{p}}_{1}$ and
    $\tilde{\mathbf{q}}^{\dag}_{2}$ to be given by the explicit formulas
    \begin{align*}
      \tilde{\mathbf{p}}_{1} &= \mathbf{A}\left( 
      (z_{1} - z_{2})\frac{\mathbf{p}_{1}}{\mathbf{q}^{\dag}_{2} \mathbf{p}_{1}} + 
      (z_{2} - \zeta_{1}) \frac{\mathbf{A}^{-1}\undertilde{\mathbf{p}}_{1} 
      }{\mathbf{q}^{\dag}_{2} \mathbf{A}^{-1} \undertilde{\mathbf{p}}_{1}} + 
      (\zeta_{1} - \zeta_{2}) \frac{\mathbf{A}^{-2} \mathbf{p}_{1} 
      }{\mathbf{q}^{\dag}_{2} \mathbf{A}^{-2} \mathbf{p}_{1}}\right), \\
      \tilde{\mathbf{q}}^{\dag}_{2} &= \left(
      (z_{2} - z_{1}) \frac{\mathbf{q}^{\dag}_{2}}{\mathbf{q}^{\dag}_{2} \mathbf{p}_{1}} + 
      (z_{1} - \zeta_{2}) \frac{\undertilde{\mathbf{q}}^{\dag}_{2} \mathbf{A}^{-1}
       }{\undertilde{\mathbf{q}}^{\dag}_{2} \mathbf{A}^{-1} \mathbf{p}_{1}} + 
       (\zeta_{2} - \zeta_{1}) \frac{ \mathbf{q}^{\dag}_{2}
        \mathbf{A}^{-2}
       }{\mathbf{q}^{\dag}_{2} \mathbf{A}^{-2} \mathbf{p}_{1}}
      \right) \mathbf{A} .
    \end{align*}
    \item Equations (\ref{eq:eqmot-p}--\ref{eq:eqmot-q}) 
    are the discrete Euler-Lagrange equations~(\ref{eq:discrEL}) with the
    Lagrangian function 
    \begin{align*}
      \mathcal{L}(\mathbf{X},\mathbf{Y}) &= 
      (z_{2} - z_{1}) \log(\mathbf{x}^{\dag}_{2} \mathbf{x}_{1}) + 
      (z_{1} - \zeta_{2}) \log(\mathbf{x}^{\dag}_{2} \mathbf{A}^{-1} \mathbf{y}_{1}) + \\ &\qquad
      (\zeta_{2} - \zeta_{1})\log(\mathbf{y}^{\dag}_{2} \mathbf{A}^{-2} \mathbf{y}_{1}) + 
      (\zeta_{1} - z_{2}) \log(\mathbf{y}^{\dag}_{2}\mathbf{A}^{-1} \mathbf{x}_{1}),
    \end{align*}
    where $\mathbf{X}=(\mathbf{x}_{1}, \mathbf{x}^{\dag}_{2})$,
    $\mathbf{Y} = (\mathbf{y}_{1}, \mathbf{y}^{\dag}_{2}) \in \mathcal{Q}=\mathbb{P}^{r-1} \times 
    (\mathbb{P}^{r-1})^{\dag}$. Every vector is an actual $r$-vector that we consider up 
    to rescaling. This makes $\mathcal{L}$ defined up to an additive constant, but this does not
    affect the Euler-Lagrange equations.
  \end{enumerate}
\end{theorem}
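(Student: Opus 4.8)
The plan is to prove the three parts in sequence, with part (i) following directly from Theorem~\ref{thm:flip}, part (ii) being a reshuffling of the same formulas, and part (iii) — the actual content — being a direct computation of the partial derivatives of $\mathcal{L}$.

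For part (i), I would apply Theorem~\ref{thm:flip}(ii) to the two factorizations that define $\eta$. The first factorization $\mathbf{L}(z) = \undertilde{\mathbf{B}}_{2}^{\mathbf{A}}(z)\undertilde{\mathbf{B}}_{1}^{\mathbf{A}}(z) = \mathbf{B}_{1}^{\mathbf{A}}(z)\mathbf{B}_{2}^{\mathbf{A}}(z)$ is exactly the situation of \eqref{eq:full-flip} with the roles of the tilded and untilded data played by the untilded and the undertilded data respectively; so \eqref{eq:q1_ptp1qtq2} gives $\mathbf{G}_{1}$ in terms of $\mathbf{p}_{1}$, $\undertilde{\mathbf{q}}^{\dag}_{2}$, $\mathbf{q}^{\dag}_{2}$, and \eqref{eq:tp2_tpt1qtq2} (reindexed) gives $\mathbf{G}_{2}$. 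Substituting these rank-one parts into the elementary divisors $\mathbf{A} + \mathbf{G}_{i}/(z-z_{i})$ and multiplying yields the stated formula for $\mathbf{L}(z)$. Here one must be careful with the normalization bookkeeping of Section~\ref{ssec:rank_one}, but since all expressions are homogeneous in each vector the normalization brackets drop out, exactly as at the end of the proof of Lemma~\ref{lem:el_divs}.

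For part (ii), the equations of motion come from the compatibility condition: the same factor $\mathbf{B}_{2}^{\mathbf{A}}(z)$ appears in the expression for $\mathbf{L}(z)$ coming from $\eta(\undertilde{\mathbf{Q}},\mathbf{Q})$ and in the expression for $\tilde{\mathbf{L}}(z)$ coming from $\eta(\mathbf{Q},\tilde{\mathbf{Q}})$. Equating the two descriptions of $\mathbf{G}_{2}$ — one in terms of $(\undertilde{\mathbf{p}}_{1}, \mathbf{q}^{\dag}_{2})$ via the first factorization, the other in terms of $(\mathbf{p}_{1}, \tilde{\mathbf{q}}^{\dag}_{2})$ via the second — gives \eqref{eq:eqmot-p}, and similarly equating the two descriptions of $\mathbf{G}_{1}$ (the common factor $\mathbf{B}_{1}^{\mathbf{A}}(z)$ between $\tilde{\mathbf{L}}$ and the next step) gives \eqref{eq:eqmot-q}. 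Dividing out the common rank-one vectors $\mathbf{q}^{\dag}_{2}$ and $\mathbf{p}_{1}$ from these matrix equations reduces them to the vector equations displayed. The explicit formulas for $\tilde{\mathbf{p}}_{1}$ and $\tilde{\mathbf{q}}^{\dag}_{2}$ are then just one side of those equations, legitimate because we only care about the lines spanned by these vectors.

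For part (iii), the strategy is to compute $\partial\mathcal{L}/\partial\mathbf{X}$ and $\partial\mathcal{L}/\partial\mathbf{Y}$ directly. Writing $\mathbf{X} = (\mathbf{x}_{1},\mathbf{x}^{\dag}_{2})$, $\mathbf{Y} = (\mathbf{y}_{1},\mathbf{y}^{\dag}_{2})$, the derivative $\partial/\partial\mathbf{x}_{1}$ of a term like $(z_{2}-z_{1})\log(\mathbf{x}^{\dag}_{2}\mathbf{x}_{1})$ is $(z_{2}-z_{1})\,\mathbf{x}^{\dag}_{2}/(\mathbf{x}^{\dag}_{2}\mathbf{x}_{1})$, a row vector, and similarly $\partial/\partial\mathbf{x}^{\dag}_{2}$ produces column vectors; the matrix $\mathbf{A}^{-1}$ and $\mathbf{A}^{-2}$ factors come along for the ride via the chain rule. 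Then I would substitute the discrete-time labels: in \eqref{eq:discrEL} the first term uses $\mathbf{L}(\undertilde{\mathbf{Q}},\mathbf{Q})$ data, so $\mathbf{X}\leftrightarrow\undertilde{\mathbf{Q}} = (\undertilde{\mathbf{p}}_{1},\undertilde{\mathbf{q}}^{\dag}_{2})$ and $\mathbf{Y}\leftrightarrow\mathbf{Q} = (\mathbf{p}_{1},\mathbf{q}^{\dag}_{2})$, while the second term uses $\mathbf{X}\leftrightarrow\mathbf{Q}$, $\mathbf{Y}\leftrightarrow\tilde{\mathbf{Q}} = (\tilde{\mathbf{p}}_{1},\tilde{\mathbf{q}}^{\dag}_{2})$. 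The $\mathbf{p}_{1}$-component of the Euler-Lagrange equation is $\partial\mathcal{L}/\partial\mathbf{y}_{1}$ evaluated at $(\undertilde{\mathbf{Q}},\mathbf{Q})$ plus $\partial\mathcal{L}/\partial\mathbf{x}_{1}$ evaluated at $(\mathbf{Q},\tilde{\mathbf{Q}})$; collecting the four resulting terms and matching against \eqref{eq:eqmot-q} (note the roles of $\mathbf{p}$ and $\mathbf{q}^{\dag}$ swap under transposition of the $\log$-arguments), and symmetrically the $\mathbf{q}^{\dag}_{2}$-component against \eqref{eq:eqmot-p}, completes the proof. I expect the main obstacle to be purely organizational: keeping straight which of $\mathbf{X},\mathbf{Y}$ maps to which of the three time-slices in each of the two summands of \eqref{eq:discrEL}, and making sure the $\mathbf{A}$-powers and the vector-vs-covector placement in each of the four logarithmic arguments line up correctly with the residue formulas of Theorem~\ref{thm:flip}; the homogeneity of $\mathcal{L}$ in each vector guarantees the equations descend to $\mathcal{Q} = \mathbb{P}^{r-1}\times(\mathbb{P}^{r-1})^{\dag}$, and the additive-constant ambiguity is invisible to the derivatives.
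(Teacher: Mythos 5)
Your proposal is correct and follows essentially the same route as the paper: part (i) is read off from Theorem~\ref{thm:flip}(ii) applied to the factorization $\undertilde{\mathbf{B}}_{2}^{\mathbf{A}}\undertilde{\mathbf{B}}_{1}^{\mathbf{A}}=\mathbf{B}_{1}^{\mathbf{A}}\mathbf{B}_{2}^{\mathbf{A}}$, and parts (ii)--(iii) amount to computing $\partial\mathcal{L}/\partial\mathbf{Y}$ at $(\undertilde{\mathbf{Q}},\mathbf{Q})$ and $\partial\mathcal{L}/\partial\mathbf{X}$ at $(\mathbf{Q},\tilde{\mathbf{Q}})$ and recognizing the resulting identities as the two competing expressions for the rank-one data of $\mathbf{B}_{1}^{\mathbf{A}}$ and $\mathbf{B}_{2}^{\mathbf{A}}$ coming from the two refactorizations, which is precisely the paper's conjugate-momentum computation. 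Only minor bookkeeping slips, none affecting the argument: for the first factorization the formula for $\mathbf{G}_{1}$ is the reindexed \eqref{eq:tq1_ptp1qtq2} rather than \eqref{eq:q1_ptp1qtq2}, and the second description of $\mathbf{G}_{2}$ depends on $(\mathbf{p}_{1},\tilde{\mathbf{p}}_{1},\mathbf{q}^{\dag}_{2})$, the factor $\mathbf{B}_{1}^{\mathbf{A}}$ being shared between $\mathbf{L}$ and $\tilde{\mathbf{L}}$ themselves rather than with the next time step.
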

\begin{proof}
  Parts (i)  follows immediately from equations~(\ref{eq:q1_ptp1qtq2}--\ref{eq:tp2_tpt1qtq2})
  in Theorem~\ref{thm:flip}. To establish parts (ii) and (iii), consider 
  the conjugated 
  momentum, $\mathbf{P} = \langle \mathbf{\pi}^{\dag}_{1}, \mathbf{\pi}_{2} \rangle = 
  \dfrac{\partial \mathcal{L}}{\partial 
  \mathbf{Y}}(\undertilde{\mathbf{Q}},\mathbf{Q})$. The discrete Euler-Lagrange 
  equations~(\ref{eq:discrEL}) then split into two groups,
  \begin{equation*}
    \mathbf{\pi}^{\dag}_{1} =  \dfrac{\partial \mathcal{L}}{\partial 
    \mathbf{\mathbf{y}_{1}}}(\undertilde{\mathbf{Q}},\mathbf{Q}) = - 
    \dfrac{\partial\mathcal{L}}{\partial
     \mathbf{x}_{1}}(\mathbf{Q},\widetilde{\mathbf{Q}})\qquad\text{and}\qquad
     \mathbf{\pi}_{2} = \dfrac{\partial \mathcal{L}}{\partial 
     \mathbf{\mathbf{y}^{\dag}_{2}}}(\undertilde{\mathbf{Q}},\mathbf{Q}) = - 
     \dfrac{\partial\mathcal{L}}{\partial
      \mathbf{x}^{\dag}_{2}}(\mathbf{Q},\widetilde{\mathbf{Q}}).
  \end{equation*}
   The first equation becomes
  \begin{align*}
    \mathbf{\pi}^{\dag}_{1} &= 
    (z_{1} - \zeta_{2}) \frac{\undertilde{\mathbf{q}}^{\dag}_{2} \mathbf{A}^{-1}
     }{\undertilde{\mathbf{q}}^{\dag}_{2} \mathbf{A}^{-1} \mathbf{p}_{1}} + 
     (\zeta_{2} - \zeta_{1}) \frac{ \mathbf{q}^{\dag}_{2}
      \mathbf{A}^{-2}
     }{\mathbf{q}^{\dag}_{2} \mathbf{A}^{-2} \mathbf{p}_{1}}
     =     (z_{1} - z_{2}) 
       \frac{\mathbf{q}^{\dag}_{2}}{\mathbf{q}^{\dag}_{2} \mathbf{p}_{1}} + 
       (z_{2} - \zeta_{1}) \frac{\tilde{\mathbf{q}}^{\dag}_{2} \mathbf{A}^{-1}
       }{\tilde{\mathbf{q}}^{\dag}_{2}\mathbf{A}^{-1}\mathbf{p}_{1}},
  \end{align*}
  which, on one hand, is equation~(\ref{eq:eqmot-q}), and on the other hand, is the equality of
  two different expression for $\mathbf{q}^{\dag}_{1}$ in $\mathbf{B}^{\mathbf{A}}_{1}(z)$, one
  coming from  $\undertilde{\mathbf{B}}_{2}^{\mathbf{A}}(z) 
  \undertilde{\mathbf{B}}_{1}^{\mathbf{A}}(z) = \mathbf{B}_{1}^{\mathbf{A}}(z) 
  \mathbf{B}_{2}^{\mathbf{A}}(z)$, and the other from $  \mathbf{B}_{2}^{\mathbf{A}}(z)
  \mathbf{B}_{1}^{\mathbf{A}}(z) = \tilde{\mathbf{B}}_{1}^{\mathbf{A}}(z) 
  \tilde{\mathbf{B}}_{2}^{\mathbf{A}}(z)$. The other equation is similar, and that 
  completes the proof.
\end{proof}

%

\section{The Isomonodromic Case}\label{sec:isom_dyn} 
In this section we consider in detail an example of an elementary
isomonodromy transformation defined in Section 3 of \cite{Kri:2004:ATDEWRECRP}.
We show that this transformation can be written in the Lagrangian form and then verify that,
similarly to the polynomial case considered in \cite{AriBor:2006:MSDDPE},
for rank $r=2$  matrices $\mathbf{L}(z)$ whose
divisor has $2$ simple zeroes and $2$ simple poles, this transformation, 
when written in the \emph{spectral coordinates}
$p$ and $q$, reduces to the
difference Painlev\'e equation dPV of the Sakai's hierarchy \cite{Sak:2001:RSAWARSGPE}. 
Thus, we establish that dPV can be written in the Lagrangian form.

\subsection{The spectral coordinates}\label{ssec:the_spectral_coordinates} 
Let $\mathbf{L}(z)\in \mathcal{M}^{\mathcal{D}}_{r}$, where 
$\mathcal{D}=\sum_{i}(z_{i} - \zeta_{i})$, and all points are finite, 
distinct and do not differ by an integer. Since the isomonodromy equations~(\ref{eq:dEQ})
are invariant w.r.t.~the conjugation action of the gauge group $\operatorname{GL}_{r}(\mathbb{C})$,
we can use this action to diagonalize $\mathbf{L}_{0}$, 
$\mathbf{L}_{0} = \operatorname{diag}\{\rho_{1},\dots,\rho_{r}\}$, which
reduces the gauge group to the subgroup
$\operatorname{D}_{r}\subset \operatorname{GL}_{r}$ of the diagonal matrices. Next, consider some
 asymptotic properties of $\mathbf{L}(z)$. Namely, let us
first introduce the matrix
$\mathbf{L}_{\infty}:=-\operatorname{res}_{\infty} \mathbf{L}(z)\,dz
= \sum_{k}\mathbf{L}_{k}$ and put $k_{i} :=
\frac{1}{\rho_{i}}(\mathbf{L}_\infty)_{ii} = (\mathbf{L}_{0}^{-1}
\mathbf{L}_{\infty})_{ii}$. Following \cite{AriBor:2006:MSDDPE},
we define the \emph{type} of $\mathbf{L}(z)$ as follows.

\begin{definition} The type $\theta$ of the matrix $\mathbf{L}(z)$
is the following collection of parameters:
\begin{equation*}
  \label{eq:typeTheta}
  \theta(\mathbf{L}(z))=\left\{z_{1},\dots,z_{n};\zeta_{1},\dots,\zeta_{n};
  \rho_{1},\dots,\rho_{r};k_{1},\dots,k_{r}\right\}. 
\end{equation*}
From the multiplicative representation (\ref{eq:L-mult}) we see that
these parameters are not independent, since
\begin{equation}
  \label{eq:k-rel}
  k_{1} +\cdots + k_{r} =
  \operatorname{tr}\mathbf{L}^{-1}_{0}\mathbf{L}_{\infty} =
  \sum_{i=1}^{n}   \operatorname{tr}(\mathbf{G}_{i} \mathbf{A}^{-1})=
  \sum_{i=1}^{n}    (z_{i} - \zeta_{i}).
\end{equation}
For a general choice of parameters, this is the only relation. We
denote by ${\mathcal M}_{r}^{\theta}$ the space of matrices of type
$\theta$ and rank $r$. This space is clearly invariant under the conjugation by
non-degenerate diagonal matrices. Factoring out this action 
we obtain the coarse moduli space $\widehat{{\mathcal M}}^{\theta}_{r}$.
\end{definition}

It is worth mentioning that fixing the type $\theta$ of $\mathbf{L}(z)$ corresponds to 
considering a symplectic leaf of the canonical foliation of $\mathcal{M}^{\mathcal{D}}_{r}$ 
w.r.t.~the universal algebro-geometric symplectic form $\omega$ of 
Krichever and Phong \cite{KriPho:1997:IGSESGT,KriPho:1998:SFTS},
see the survey \cite{DHoKriPho::STSFHTS} for details.

\begin{lemma} The dimension of the big cell of the moduli space
  $\widehat{{\mathcal M}}^{\theta}_{r}$ is
\begin{equation*}
  \label{eq:dim-Mtheta}
  \dim {\mathcal M}^{\theta}_{r} = 2(n-1)(r-1).
\end{equation*}
\end{lemma}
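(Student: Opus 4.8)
The plan is to compute the dimension by counting parameters in the multiplicative representation (\ref{eq:L-mult}) and then subtracting the dimension of the symmetries we have quotiented by, together with the one relation (\ref{eq:k-rel}) that constrains the type $\theta$. Concretely, each elementary divisor $\mathbf{B}_{i}^{\mathbf{A}}(z) = \mathbf{A} + \mathbf{G}_{i}/(z - z_{i})$ with $\mathbf{A}$ fixed (and $\mathbf{A}^{n} = \mathbf{L}_{0}$ already diagonalized) contributes its rank-one residue matrix $\mathbf{G}_{i} = \mathbf{p}_{i}\mathbf{q}^{\dag}_{i}$. A rank-one $r\times r$ matrix $\mathbf{G}_{i}$ depends on $\mathbf{p}_{i}\in\mathbb{C}^{r}$ and $\mathbf{q}^{\dag}_{i}\in(\mathbb{C}^{r})^{\dag}$ modulo the common scaling, so $\dim = 2r - 1$; however, fixing $\zeta_{i}$ via $\operatorname{tr}(\mathbf{G}_{i}\mathbf{A}^{-1}) = z_{i} - \zeta_{i}$ (Lemma~\ref{lem:el_divs}) imposes one more scalar condition, leaving $2r - 2 = 2(r-1)$ parameters per pole, hence $2n(r-1)$ parameters in total for the $n$ elementary divisors with prescribed divisor $\mathcal{D}$.

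Next I would account for the remaining data in the type $\theta$ and the symmetries. The parameters $z_{i}$, $\zeta_{i}$, $\rho_{i}$ are held fixed (they are part of $\theta$), and among the $k_{i}$ there is exactly one relation (\ref{eq:k-rel}); but the $k_{i}$ are \emph{not} independent coordinates on $\mathcal{M}_{r}^{\theta}$ — they are functions of the $\mathbf{G}_{i}$ — so fixing their values imposes $r - 1$ further independent scalar constraints on the $2n(r-1)$-dimensional space (the $r$ quantities $k_{i}$ minus the automatic relation (\ref{eq:k-rel})). This brings the count to $2n(r-1) - (r-1)$. Finally, we quotient by the residual gauge group $\operatorname{D}_{r}$ of diagonal matrices acting by conjugation; this action has an $r$-dimensional group but a one-dimensional stabilizer (the scalar matrices act trivially by conjugation), so it cuts the dimension by $r - 1$. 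Collecting terms:
\begin{equation}
  \dim\widehat{\mathcal{M}}_{r}^{\theta} = 2n(r-1) - (r-1) - (r-1) = 2n(r-1) - 2(r-1) = 2(n-1)(r-1),
\end{equation}
as claimed. (The lemma's displayed formula writes $\mathcal{M}_{r}^{\theta}$ for $\widehat{\mathcal{M}}_{r}^{\theta}$; the intended statement concerns the quotient.)

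The main obstacle is justifying that the $r - 1$ constraints coming from fixing the $k_{i}$ and the $r - 1$ directions killed by the $\operatorname{D}_{r}$-action are genuinely independent and transverse on the big cell — i.e.\ that the naive parameter count is exact rather than merely an upper bound. I would handle this by working on the open dense locus where all $\mathbf{p}_{i}$ have nonzero entries and all $\mathbf{q}^{\dag}_{i}$ pair nontrivially with the relevant vectors (so that the normalizations in Section~\ref{ssec:rank_one} are available), and checking that the differential of the map $(\mathbf{G}_{1},\dots,\mathbf{G}_{n})\mapsto(k_{1},\dots,k_{r})$ is surjective onto the hyperplane $\sum k_{i} = \sum(z_{i}-\zeta_{i})$ there, while the diagonal conjugation acts with one-dimensional stabilizer and otherwise freely. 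A cleaner alternative, which I would mention, is to identify $\mathcal{M}_{r}^{\theta}$ with a symplectic leaf of the Krichever–Phong form $\omega$ on $\mathcal{M}_{r}^{\mathcal{D}}$ — the dimension of $\mathcal{M}_{r}^{\mathcal{D}}$ is $2n(r-1)$, the type $\theta$ pins down the $2r$ Casimir-type functions $\rho_{i}, k_{i}$ (of which $2r - 1$ are independent after (\ref{eq:k-rel}), and $r$ of these, the $\rho_{i}$, are already fixed by diagonalizing $\mathbf{L}_{0}$), and the residual $\operatorname{D}_{r}/\mathbb{C}^{*}$ symplectic reduction removes a further $2(r-1)$, again yielding $2(n-1)(r-1)$ and confirming evenness, as a symplectic leaf must be.
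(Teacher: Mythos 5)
Your count is correct and follows essentially the same route as the paper: $2(r-1)$ parameters per elementary divisor once the divisor fixes each $\zeta_i$, minus $r-1$ conditions from fixing the $k_i$ (one being automatic by the relation \eqref{eq:k-rel}), minus $r-1$ for the residual diagonal conjugation with scalar stabilizer, giving $2(n-1)(r-1)$. The extra transversality remarks and the symplectic-leaf cross-check go beyond what the paper records but do not change the argument.
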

\begin{proof}
  Consider the multiplicative representation of $\mathbf{L}(z)$. Each
  elementary divisor is given by $2(r-1)$ parameters. Fixing the diagonal
  elements of $\mathbf{L}_{\infty}$ imposes $r-1$ (in view of
  (\ref{eq:k-rel})) additional conditions, and therefore 
  \begin{equation*}
    \dim{\mathcal M}^{\theta}_{r} = n(2r-2) - (r-1) = (2n-1)(r-1).
  \end{equation*}
  Further action by the diagonal matrices reduces the dimension by 
  $r-1$, and so 
  \begin{equation*}
    \dim\widehat{{\mathcal M}}^{\theta}_{r} = 2(n-1)(r-1).
  \end{equation*}
 \end{proof}

Thus, when $n=r=2$, $\widehat{{\mathcal M}}^{\theta}_{r}$ is a complex surface. 
The spectral coordinate system $(q,p)$ on the space $\widehat{{\mathcal
M}}^{\theta}_{2}$ is given by the \emph{zero} $q$ of 
$\mathbf{L}(z)_{12}$ and the 
value of $\mathbf{L}(z)_{11}$ at $q$ normalized in the following way:
\begin{equation*}
  \label{eq:pd-def}
  p =
  \frac{q-z_{1}}{q-\zeta_{2}}\mathbf{L}(q)_{11},\qquad\text{where}\quad
  \mathbf{L}(q)_{12}=0. 
\end{equation*}

The spectral coordinates $p$ and $q$ are essentially the same as in \cite{AriBor:2006:MSDDPE}.
From the integrable systems point of view these coordinates are a particular case of the 
Darboux coordinates of the universal algebro-geometric symplectic form $\omega$. For a particular
case of the hyperelliptic KdV curves these coordinates were first considered by Novikov and 
Veselov \cite{VesNov:1982:PBTCWAGDKVEFP}, the general case was recently established by 
Krichever in \cite{Kri:2000:EATL, Kri:2000:ESDNENBAE}.

Note that since 
\begin{equation*}
  \det \mathbf{L}(z) = \det \mathbf{L}_{0}\frac{(z-\zeta_{1})(z-\zeta_{2})}{(z - z_{1})(z-z_{2})},
  \qquad \mathbf{L}(q) = \begin{bmatrix}
    \frac{p(q - \zeta_{2})}{(q - z_{1})} & 0 \\ * & 
    \frac{\rho_{1} \rho_{2}(q-\zeta_{1})}{p(q - z_{2})}
  \end{bmatrix}.
\end{equation*}

Next, we need to obtain the explicit formulas for the additive and the 
multiplicative representations of $\mathbf{L}(z)$ in the spectral coordinates.


\subsection{The Additive and the Multiplicative Representations of $\mathbf{L}(z)$.}
\label{ssec:the_additive_and_the_multiplicative_representations_of_mathbf_} 

\begin{lemma} The additive representation~(\ref{eq:L-addrep}) of $\mathbf{L}(z)$ in the
$pq$-coordinates is given by 
\begin{align*}
  \mathbf{L}_{1} &=  \frac{q - z_{1}}{z_{2} - z_{1}} 
    \begin{bmatrix}
      1 \\ \rho_{2}(q-z_{2}+k_{2}) - \frac{\rho_{1}\rho_{2}}{p}(q-\zeta_{1})
    \end{bmatrix}\begin{bmatrix}
      \rho_{1}(q-z_{2}+k_{1}) - \frac{p(q-z_{2})(q-\zeta_{2})}{q-z_{1}}& 1
    \end{bmatrix},\\
  \mathbf{L}_{2} &= \frac{q-z_{2}}{z_{1}-z_{2}} 
    \begin{bmatrix}
      1 \\ \rho_{2}(q-z_{1}+k_{2}) -
      \frac{\rho_{1}\rho_{2}(q-z_{1})(q-\zeta_{1})}{p(q-z_{2})}
    \end{bmatrix}\begin{bmatrix}
      \rho_{1}(q-z_{1}+k_{1}) - p(q-\zeta_{2}) & 1
    \end{bmatrix}.
\end{align*}
  
\end{lemma}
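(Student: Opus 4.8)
The plan is to recover both rank-one residue matrices $\mathbf{L}_1$ and $\mathbf{L}_2$ from the data that defines the spectral coordinates $(p,q)$, using the additive representation $\mathbf{L}(z) = \mathbf{L}_0 + \frac{\mathbf{L}_1}{z-z_1} + \frac{\mathbf{L}_2}{z-z_2}$ together with the known constraints. First I would write each residue in the form $\mathbf{L}_k = \mathbf{u}_k \mathbf{v}^{\dag}_k$ with a convenient normalization (say, second component of each $\mathbf{v}^{\dag}_k$ equal to $1$), since by hypothesis the residues have rank one. The four vectors $\mathbf{u}_1,\mathbf{v}^{\dag}_1,\mathbf{u}_2,\mathbf{v}^{\dag}_2$ carry $2(2r-1)=6$ scalar parameters after the two normalizations, which already matches $\dim\widehat{\mathcal{M}}^{\theta}_2 = 2$ plus the $4$ coordinates of $z_i,\zeta_i$ once the type data is fixed, so the counting is consistent.

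Next I would pin down these vectors using four pieces of information: (1) the $(1,2)$-entry $\mathbf{L}(z)_{12}$ vanishes at $z=q$; since $\mathbf{L}_0$ is diagonal, $\mathbf{L}(z)_{12} = \frac{(\mathbf{L}_1)_{12}}{z-z_1} + \frac{(\mathbf{L}_2)_{12}}{z-z_2}$, a rational function with numerator linear in $z$, so its unique zero being $q$ together with the residue data fixes the ratio $(\mathbf{L}_1)_{12}:(\mathbf{L}_2)_{12}$ in terms of $q,z_1,z_2$; (2) the normalization \eqref{eq:pd-def} gives the value of $\mathbf{L}(z)_{11}$ at $q$, hence via $\mathbf{L}(q)_{11} = \rho_1 + \frac{(\mathbf{L}_1)_{11}}{q-z_1} + \frac{(\mathbf{L}_2)_{11}}{q-z_2}$ another linear relation; (3) the trace-at-infinity data $(\mathbf{L}_\infty)_{ii} = (\mathbf{L}_1 + \mathbf{L}_2)_{ii} = \rho_i k_i$, i.e.\ two more linear relations on the diagonal entries of the residues; and (4) the determinant identity $\det\mathbf{L}(z) = \det\mathbf{L}_0\,\frac{(z-\zeta_1)(z-\zeta_2)}{(z-z_1)(z-z_2)}$, equivalently Lemma~\ref{lem:el_divs}(i) applied to each factor of the multiplicative representation, which supplies the remaining constraints (in particular $\operatorname{tr}(\mathbf{L}_k\mathbf{L}_0^{-1}) = z_k - \zeta_k$). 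Combining the vanishing condition (1) with the rank-one form, I expect the rows of $\mathbf{L}_k$ to be forced into the shape $\begin{bmatrix}* & 1\end{bmatrix}$ up to the scalar prefactors $\frac{q-z_1}{z_2-z_1}$ and $\frac{q-z_2}{z_1-z_2}$ that appear in the statement — these prefactors are exactly what make $(\mathbf{L}_1)_{12}+(\mathbf{L}_2)_{12}$ vanish identically (the sum is the residue of $\mathbf{L}_{12}$ at infinity, which must be $0$ since $\mathbf{L}$ is regular there) while placing the zero at $q$.

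Once the two scalar prefactors and the structure $\mathbf{u}_k\begin{bmatrix}*_k & 1\end{bmatrix}$ are in place, the entries of $\mathbf{u}_k$ and the starred quantities $*_k$ are determined by the linear conditions (2), (3), and the determinant condition (4): the top entry of each $\mathbf{u}_k$ is fixed by the prefactor normalization, the $(1,1)$ diagonal relations from (2) and (3) solve for $*_1,*_2$, and the $(2,2)$ relations together with $\mathbf{L}(q)_{22} = \frac{\rho_1\rho_2(q-\zeta_1)}{p(q-z_2)}$ (read off from the displayed form of $\mathbf{L}(q)$) solve for the bottom entries of $\mathbf{u}_1,\mathbf{u}_2$. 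I would then simply substitute and simplify to match the claimed formulas, using the relation $z_i - \zeta_i = \operatorname{tr}(\mathbf{G}_i\mathbf{A}^{-1})$ and $k_1+k_2 = (z_1-\zeta_1)+(z_2-\zeta_2)$ from \eqref{eq:k-rel} wherever needed to rewrite $k_i$-dependent expressions.

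The main obstacle I anticipate is bookkeeping rather than conceptual: there are several competing normalizations (the $p$-normalization of \eqref{eq:pd-def}, the rank-one vector normalizations, and the diagonal gauge freedom $\operatorname{D}_2$), and one must fix the gauge consistently so that the answer is a genuine representative and not merely a $\operatorname{D}_2$-orbit. Concretely, the subtle point is checking that the value of $p$ extracted from the resulting $\mathbf{L}_1,\mathbf{L}_2$ via \eqref{eq:pd-def} is self-consistent — i.e.\ that $\frac{q-z_1}{q-\zeta_2}\mathbf{L}(q)_{11} = p$ is an identity and not an extra equation — which amounts to verifying that the linear system (1)–(4) is exactly determined (rank $6$ on $6$ unknowns) and not over- or under-determined; the dimension count from the previous Lemma is what guarantees this, and I would invoke it explicitly. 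After that, obtaining the stated formulas is a direct, if somewhat lengthy, computation, and I would present only the final simplified expressions together with a remark that they can be verified by recomputing $\det\mathbf{L}(z)$, $\mathbf{L}_\infty$, and $\mathbf{L}(q)$ from them.
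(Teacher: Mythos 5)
Your overall strategy coincides with the paper's: write each residue as a normalized rank-one matrix, impose $\mathbf{L}(q)_{12}=0$, the definition of $p$, the diagonal conditions $(\mathbf{L}_\infty)_{ii}=\rho_i k_i$, and the determinant identity evaluated at $z=q$ (where the vanishing $(1,2)$ entry makes $\det\mathbf{L}(q)$ the product of the diagonal entries), and then solve the resulting system. However, there is a genuine error in how you propose to pin down the $(1,2)$ entries. You claim the prefactors $\frac{q-z_1}{z_2-z_1}$ and $\frac{q-z_2}{z_1-z_2}$ are forced because $(\mathbf{L}_1)_{12}+(\mathbf{L}_2)_{12}$ must vanish, ``the residue of $\mathbf{L}_{12}$ at infinity being $0$ since $\mathbf{L}$ is regular there.'' This is false twice over: regularity at infinity does not make the residue at infinity vanish (consider $1/z$), and in the stated formulas the sum is $\frac{q-z_1}{z_2-z_1}+\frac{q-z_2}{z_1-z_2}=1$, i.e.\ $(\mathbf{L}_\infty)_{12}=1$, not $0$. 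Worse, imposing $(\mathbf{L}_1)_{12}+(\mathbf{L}_2)_{12}=0$ together with the zero of $\mathbf{L}(z)_{12}$ at $q$ gives $(\mathbf{L}_1)_{12}(z_1-z_2)=0$, which forces $\mathbf{L}(z)_{12}\equiv 0$, a degenerate configuration. What the data actually determines is only the ratio $(\mathbf{L}_1)_{12}:(\mathbf{L}_2)_{12}=(q-z_1):-(q-z_2)$; the overall scale, equivalently the value of $(\mathbf{L}_\infty)_{12}$, is pure diagonal-gauge freedom.

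For the same reason, your expectation that the system (1)--(4) be ``exactly determined (rank $6$ on $6$ unknowns)'' contradicts your own correct remark that the $\operatorname{D}_2$ action has not been fixed: the system is necessarily underdetermined by exactly one parameter. This is precisely how the paper proceeds: it observes that there are three equations for the four unknowns $\alpha_1,\alpha_2,b_1,b_2$, provisionally sets $\alpha_1=1$, solves for $b_1,b_2$ from $p$ and $\rho_1k_1$, solves for $a^1,a^2$ from $\rho_2k_2$ and $\det\mathbf{L}(q)$, and only at the end conjugates by $\operatorname{diag}\{q-z_1,\,z_2-z_1\}$ to land on the particular representative displayed in the lemma. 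To repair your argument, replace the ``vanishing at infinity'' step by an explicit gauge choice (say $(\mathbf{L}_1)_{12}=1$, or $(\mathbf{L}_\infty)_{12}=1$) and exhibit the diagonal conjugation carrying your solution to the stated form; with that correction the remainder of your computation goes through essentially as in the paper.
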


\begin{proof}
By choosing a linear normalization in which the first components of the column vectors and the 
second components of the row vectors are equal to $1$, we get
\begin{equation*}
  \mathbf{L}(z) =
  \begin{bmatrix}
    \rho_{1} & 0 \\ 0 & \rho_{2}
  \end{bmatrix} + \alpha_{1} \frac{
    \begin{bmatrix}
      1 \\a^{1}
    \end{bmatrix}
    \begin{bmatrix}
      b_{1} & 1
    \end{bmatrix}
  }{z-z_{1}}   + \alpha_{2} \frac{
    \begin{bmatrix}
      1 \\a^{2}
    \end{bmatrix}
    \begin{bmatrix}
      b_{2} & 1
    \end{bmatrix}
  }{z-z_{2}}.   
\end{equation*}
We then obtain the following equations for
$\alpha_{1}$, $\alpha_{2}$, $b_{1}$, $b_{2}$:
\begin{align*}
  \mathbf{L}(q)_{12} &= \frac{\alpha_{1}}{q-z_{1}} +
  \frac{\alpha_{2}}{q-z_{2}}=0, \\
  p &= \frac{q-z_{1}}{q-\zeta_{2}}\mathbf{L}(q)_{11} =
  \frac{q-z_{1}}{q-\zeta_{2}} \left( \rho_{1} +
    \frac{\alpha_{1}}{q-z_{1}}b_{1} +
    \frac{\alpha_{2}}{q-z_{2}}b_{2}\right)\\ &=
  \frac{q-z_{1}}{q-\zeta_{2}} \left( \rho_{1} +
    \frac{\alpha_{1}}{q-z_{1}}(b_{1}-b_{2}) \right), \\
  (\mathbf{L}_{\infty})_{11}&=\alpha_{1}b_{1} + \alpha_{2}b_{2} =
  \rho_{1} k_{1}.
\end{align*}
Note that we have only three equation for the four unknowns. This is due to  
the conjugation action by the constant diagonal matrices
$\mathbf{D}=\operatorname{diag}\{\gamma_{1},\gamma_{2}\}$. Taking this action into account, 
\begin{align*}
  \mathbf{D}(\alpha_{i}\mathbf{a}_{i}\mathbf{b}_{i}^{\dag})\mathbf{D}^{-1}
  = \left(\frac{\gamma_{1}}{\gamma_{2}}\alpha_{i}\right)
  \begin{bmatrix}
    1 \\ \left( \frac{\gamma_{2}}{\gamma_{1}} a^{i}\right)
  \end{bmatrix}
  \begin{bmatrix}
    \left( \frac{\gamma_{2}}{\gamma_{1}}b_{i}\right) & 1
  \end{bmatrix},
\end{align*}
we see that we can provisionally put $\alpha_{1}=1$. Then
\begin{align*}
  \alpha_{2} &=-\frac{q-z_{2}}{q-z_{1}}\\
  b_{1} &= \rho_{1}k_{1} - \alpha_{2}b_{2} = \rho_{1}k_{1} +
  \frac{q-z_{2}}{q-z_{1}}b_{2} 
  = b_{2} + (q-\zeta_{2})p - (q-z_{1})\rho_{1},
  \intertext{and so }
  b_{2} &= \frac{q-z_{1}}{z_{2}-z_{1}}\left( \rho_{1}(q-z_{1}+k_{1}) -
    p(q-\zeta_{2})\right),\\
  b_{1} &= \frac{q-z_{1}}{z_{2}-z_{1}}\left( \rho_{1} (q-z_{2} +
    k_{1}) - p\frac{(q-\zeta_{2})(q-z_{2})}{q-z_{1}}\right).
\end{align*}

To find $a^{1}$ and $a^{2}$ we use the definition of $k_{2}$ and
  $\det\mathbf{L}(q)$:
  \begin{align*}
    (\mathbf{L}_{\infty})_{22} &=\alpha_{1} a^{1} + \alpha_{2} a^{2} =
    \rho_{2} k_{2} \\
      \det \mathbf{L}(q) &= 
      \det\begin{bmatrix}
        \frac{q-\zeta_{2}}{q-z_{1}}p & 0 \\ * & \rho_{2} + \frac{\alpha_{1}}{q-z_{1}}a^{1} +
        \frac{\alpha_{2}}{q-z_{2}}a^{2}
      \end{bmatrix} = p\frac{q-\zeta_{2}}{q-z_{1}}\left(\rho_{2} +
        \frac{a^{1}-a^{2}}{q-z_{1}} \right) \\& = \rho_{1}\rho_{2}\frac{
        (q-\zeta_{1})(q-\zeta_{2})}{ (q-z_{1}) (q-z_{2})},
      \intertext{and so}
      a^{1}-a^{2} &=
      \frac{\rho_{1}\rho_{2}}{p}\cdot\frac{(q-\zeta_{1})(q-z_{1})}{q-z_{2}} -\rho_{2}(q-z_{1}).
  \end{align*}
  Thus,
  \begin{align*}
    a^{1}&=\rho_{2}k_{2} -\alpha_{2} a^{2} = \rho_{2}k_{2} +
    \frac{q-z_{2}}{q-z_{1}}a^{2} = a^{2} +
    \frac{\rho_{1}\rho_{2}}{p}\cdot\frac{(q-\zeta_{1})(q-z_{1})}{q-z_{2}}
    -\rho_{2}(q-z_{1}),\\
    \intertext{which results in}
  a^{1} &= \frac{q-z_{1}}{z_{2}-z_{1}}\left( \rho_{2}(q-z_{2}+k_{2})
    -\frac{\rho_{1}\rho_{2}(q-\zeta_{1})}{p}\right), \\
    a^{2} &= \frac{q-z_{1}}{z_{2}-z_{1}}\left( \rho_{2}(q-z_{1}+k_{2}) -
    \frac{\rho_{1}\rho_{2}}{p}\cdot\frac{(q-\zeta_{1})(q-z_{1})}{q-z_{2}}\right).
  \end{align*}
  Conjugating by the diagonal matrix
  $\mathbf{D}=\operatorname{diag}\{q-z_{1},z_{2}-z_{1}\}$ finishes the
  proof. 
\end{proof}

To obtain the multiplicative description of $\mathbf{L}(z)$ in the spectral coordinates, 
we need the following Lemma.

\begin{lemma}\label{lem:add2mult}
  Let 
  \begin{equation*}
    \mathbf{L}(z) = \mathbf{L}_{0} + \frac{\mathbf{L}_{1}}{z-z_{1}} + 
    \frac{\mathbf{L}_{2}}{z-z_{2}} = \mathbf{B}^{\mathbf{A}}_{1}(z) \mathbf{B}^{\mathbf{A}}_{2}(z),
  \end{equation*}
  where $\mathbf{L}_{0} = \mathbf{A}^{2}$. Then
  \begin{enumerate}[(i)]
    \item the additive representation is given in terms of the multiplicative representation 
    by 
    \begin{equation*}
      \mathbf{L}_{1} = \mathbf{G}_{1} \mathbf{B}^{\mathbf{A}}_{2}(z_{1}),\qquad
      \mathbf{L}_{2} = \mathbf{B}^{\mathbf{A}}_{1}(z_{2})\mathbf{G}_{2};
    \end{equation*}
    \item the multiplicative representation is given in terms of the additive representation 
    by
    \begin{equation*}
      \mathbf{G}_{1}\mathbf{A} = \mathbf{L}_{1} + \frac{(z_{1} - \zeta_{1}) - 
      \operatorname{tr}(\mathbf{L}_{0}^{-1}\mathbf{L}_{1})}{
      \operatorname{tr}(\mathbf{L}_{0}^{-1}\mathbf{L}_{1}\mathbf{L}_{2})}
      \mathbf{L}_{1}\mathbf{L}_{2},\qquad
      \mathbf{A}\mathbf{G}_{2} = \mathbf{L}_{2} + \frac{(z_{2} - \zeta_{2}) - 
      \operatorname{tr}(\mathbf{L}_{0}^{-1}\mathbf{L}_{2})}{
      \operatorname{tr}(\mathbf{L}_{0}^{-1}\mathbf{L}_{1}\mathbf{L}_{2})}
      \mathbf{L}_{1}\mathbf{L}_{2}.
    \end{equation*}
  \end{enumerate}
\end{lemma}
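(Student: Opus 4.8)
\emph{Proof proposal.} The plan is that part~(i) is an immediate residue computation and part~(ii) is obtained by inverting the relations of part~(i); the one genuine step is to identify the scalar relating the two rank-one matrices $\mathbf{G}_1\mathbf{G}_2$ and $\mathbf{L}_1\mathbf{L}_2$. For part~(i) I would write $\mathbf{L}(z) = \bigl(\mathbf{A} + \tfrac{\mathbf{G}_1}{z-z_1}\bigr)\mathbf{B}^{\mathbf{A}}_2(z)$ and take the residue at $z_1$: since $\mathbf{B}^{\mathbf{A}}_2(z)$ is holomorphic at $z_1$ (its only pole is at $z_2\ne z_1$) and $\mathbf{A}$ is constant, only the term $\tfrac{\mathbf{G}_1}{z-z_1}\mathbf{B}^{\mathbf{A}}_2(z)$ contributes a pole there, so $\mathbf{L}_1 = \operatorname{res}_{z_1}\mathbf{L}(z) = \mathbf{G}_1\mathbf{B}^{\mathbf{A}}_2(z_1)$; the formula $\mathbf{L}_2 = \mathbf{B}^{\mathbf{A}}_1(z_2)\mathbf{G}_2$ follows the same way from $\mathbf{L}(z) = \mathbf{B}^{\mathbf{A}}_1(z)\bigl(\mathbf{A} + \tfrac{\mathbf{G}_2}{z-z_2}\bigr)$.

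For part~(ii), expanding the formulas of part~(i) gives $\mathbf{L}_1 = \mathbf{G}_1\mathbf{A} + \tfrac{1}{z_1 - z_2}\mathbf{G}_1\mathbf{G}_2$ and $\mathbf{L}_2 = \mathbf{A}\mathbf{G}_2 + \tfrac{1}{z_2 - z_1}\mathbf{G}_1\mathbf{G}_2$, so it only remains to express $\mathbf{G}_1\mathbf{G}_2$ through the additive data. Writing $\mathbf{G}_i = \mathbf{p}_i\mathbf{q}^{\dag}_i$ (each of rank one, since $\mathcal{D}$ is simple), part~(i) shows $\mathbf{L}_1$ has column space $\operatorname{span}(\mathbf{p}_1)$ and $\mathbf{L}_2$ has row space $\operatorname{span}(\mathbf{q}^{\dag}_2)$, so $\mathbf{L}_1\mathbf{L}_2$ and $\mathbf{G}_1\mathbf{G}_2$ are both rank one with column space $\subseteq\operatorname{span}(\mathbf{p}_1)$ and row space $\subseteq\operatorname{span}(\mathbf{q}^{\dag}_2)$; hence $\mathbf{G}_1\mathbf{G}_2 = c\,\mathbf{L}_1\mathbf{L}_2$ for a scalar $c$. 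I would pin down $c$ by multiplying $\mathbf{L}_1 = \mathbf{G}_1\mathbf{A} + \tfrac{1}{z_1-z_2}\mathbf{G}_1\mathbf{G}_2$ by $\mathbf{L}_0^{-1} = \mathbf{A}^{-2}$ and taking traces: cyclicity together with $\operatorname{tr}(\mathbf{G}_1\mathbf{A}^{-1}) = z_1 - \zeta_1$ from Lemma~\ref{lem:el_divs} gives
\[
  \operatorname{tr}(\mathbf{L}_0^{-1}\mathbf{L}_1) = (z_1-\zeta_1) + \frac{c\,\operatorname{tr}(\mathbf{L}_0^{-1}\mathbf{L}_1\mathbf{L}_2)}{z_1-z_2},
\]
so that $c = (z_1-z_2)\bigl[\operatorname{tr}(\mathbf{L}_0^{-1}\mathbf{L}_1) - (z_1-\zeta_1)\bigr]\big/\operatorname{tr}(\mathbf{L}_0^{-1}\mathbf{L}_1\mathbf{L}_2)$; substituting this into $\mathbf{G}_1\mathbf{A} = \mathbf{L}_1 - \tfrac{c}{z_1-z_2}\mathbf{L}_1\mathbf{L}_2$ yields the first claimed formula. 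The formula for $\mathbf{A}\mathbf{G}_2$ is proved identically, starting instead from $\operatorname{tr}(\mathbf{L}_0^{-1}\mathbf{L}_2) = (z_2-\zeta_2) + \tfrac{1}{z_2-z_1}\operatorname{tr}(\mathbf{L}_0^{-1}\mathbf{G}_1\mathbf{G}_2)$; the two expressions for $c$ obtained this way agree precisely by the relation~(\ref{eq:k-rel}).

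The main (and essentially only) obstacle is a harmless genericity point: the scalar $c$ is well defined only where $\mathbf{L}_1\mathbf{L}_2\ne 0$ and $\operatorname{tr}(\mathbf{L}_0^{-1}\mathbf{L}_1\mathbf{L}_2)\ne 0$, and these are among the open conditions already built into the ansatz (off this dense set the multiplicative representation degenerates anyway). Everything else is routine manipulation of residues and traces.
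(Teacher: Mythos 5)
Your proposal is correct and follows essentially the same route as the paper: part (i) by taking residues at $z_{1}$, $z_{2}$, and part (ii) by observing that $\mathbf{G}_{1}\mathbf{G}_{2}$ is proportional to the rank-one matrix $\mathbf{L}_{1}\mathbf{L}_{2}$ and fixing the proportionality constant by multiplying by $\mathbf{L}_{0}^{-1}=\mathbf{A}^{-2}$ and taking traces, using $\operatorname{tr}(\mathbf{G}_{i}\mathbf{A}^{-1})=z_{i}-\zeta_{i}$. Your extra remarks (the column/row-space justification of the proportionality, the consistency of the two expressions for the constant via~(\ref{eq:k-rel}), and the genericity caveat) only flesh out details the paper leaves implicit.
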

\begin{proof}
  Part (i) follows immediately from taking the residues at $z_{i}$. To establish part (ii), 
  note that from part (i) it follows that 
  \begin{equation}
    \mathbf{G}_{1} \mathbf{A} = \mathbf{L}_{1} + 
    \frac{\mathbf{G}_{1} \mathbf{G}_{2}}{z_{2} - z_{1}},\qquad
    \mathbf{A} \mathbf{G}_{2} = \mathbf{L}_{2} + 
    \frac{\mathbf{G}_{1} \mathbf{G}_{2}}{z_{1} - z_{2}},
    \label{eq:a2m-1}
  \end{equation}
  and so what we really need is to establish the identity
  \begin{equation*}
    \mathbf{G}_{1} \mathbf{G}_{2} = (z_{2} - z_{1})
    \frac{(z_{1} - \zeta_{1}) - \operatorname{tr}(\mathbf{L}_{0}^{-1} \mathbf{L}_{1})}{
    \operatorname{tr}(\mathbf{L}_{0}^{-1}\mathbf{L}_{1}\mathbf{L}_{2})} 
    \mathbf{L}_{1} \mathbf{L}_{2} = (z_{1} - z_{2}) 
    \frac{(z_{2} - \zeta_{2}) - \operatorname{tr}(\mathbf{L}_{0}^{-1} \mathbf{L}_{2})}{
    \operatorname{tr}(\mathbf{L}_{0}^{-1}\mathbf{L}_{1}\mathbf{L}_{2})}
    \mathbf{L}_{1} \mathbf{L}_{2}.
  \end{equation*}
  Since $\mathbf{L}_{i}$ and $\mathbf{G}_{i}$ are of rank one, from (i) we see that 
  $\mathbf{G}_{1} \mathbf{G}_{2} = \gamma \mathbf{L}_{1} \mathbf{L}_{2}$. The proportionality
  constant $\gamma$ can be found using either of the equations~(\ref{eq:a2m-1}). For example,
  using the first equation, we see that
  \begin{equation*}
    \gamma = \frac{\operatorname{tr}( \mathbf{G}_{1} \mathbf{G}_{2} 
    \mathbf{A}^{-2})}{\operatorname{tr}(\mathbf{L}_{0}^{-1} \mathbf{L}_{1} \mathbf{L}_{2})} = 
    \frac{(z_{2} - z_{1})\operatorname{tr}( \mathbf{G}_{1} \mathbf{A}^{-1} - 
    \mathbf{L}_{1} \mathbf{A}^{-2})}{
    \operatorname{tr}(\mathbf{L}_{0}^{-1} \mathbf{L}_{1} \mathbf{L}_{2})} = 
    (z_{2} - z_{1})\frac{(z_{1} - \zeta_{1}) - \operatorname{tr}( \mathbf{L}_{0}^{-1}\mathbf{L}_{1})
    }{\operatorname{tr}(\mathbf{L}_{0}^{-1} \mathbf{L}_{1} \mathbf{L}_{2})}.
  \end{equation*} 
\end{proof}
\begin{corollary}\label{cor:L-mult}
  The multiplicative representation of $\mathbf{L}(z)$ in the spectral coordinates is given by 
  \begin{align*}
    \mathbf{L}(z) &= \left( \mathbf{L}_{0} + \frac{\mathbf{G}_{1} \mathbf{A}}{z - z_{1}} \right)
    \mathbf{L}_{0}^{-1} \left( \mathbf{L}_{0 } + \frac{\mathbf{A} \mathbf{G}_{2}}{z - z_{2}}\right),
    \\
    \intertext{where}
    \mathbf{G}_{1} \mathbf{A} &=\frac{p}{p-\rho_{1}} \begin{bmatrix}
      1 \\ \rho_{2} (q - z_{2} + k_{2}) - \frac{\rho_{1}\rho_{2}(q - \zeta_{1})}{p}
    \end{bmatrix} \begin{bmatrix}
    - \rho_{1} (q - k_{1} - \zeta_{2}) + \frac{\rho_{1}^{2}(q - z_{1})}{p}   & 1
    \end{bmatrix} \\
    \mathbf{A} \mathbf{G}_{2} &= \frac{\rho_{1}}{\rho_{1} - p} \begin{bmatrix} 1 \\
      - \rho_{2} (q - k_{2} - \zeta_{1}) + \frac{\rho_{2} p}{\rho_{1}}(q - z_{2})
    \end{bmatrix} \begin{bmatrix}
      \rho_{1} (q + k_{1} - z_{1}) - p(q - \zeta_{2}) & 1
    \end{bmatrix}.
  \end{align*}
\end{corollary}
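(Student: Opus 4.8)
The plan is to combine the additive representation of $\mathbf{L}(z)$ in the $pq$-coordinates, just computed in the preceding Lemma, with the additive-to-multiplicative conversion formulas of Lemma~\ref{lem:add2mult}(ii). First I would observe that the displayed factorization in the statement is merely a rewriting of the multiplicative representation~(\ref{eq:L-mult}): since $\mathbf{L}_{0}=\mathbf{A}^{2}$ we have $\mathbf{B}^{\mathbf{A}}_{1}(z)=\bigl(\mathbf{L}_{0}+\frac{\mathbf{G}_{1}\mathbf{A}}{z-z_{1}}\bigr)\mathbf{A}^{-1}$ and $\mathbf{B}^{\mathbf{A}}_{2}(z)=\mathbf{A}^{-1}\bigl(\mathbf{L}_{0}+\frac{\mathbf{A}\mathbf{G}_{2}}{z-z_{2}}\bigr)$, so that $\mathbf{B}^{\mathbf{A}}_{1}(z)\mathbf{B}^{\mathbf{A}}_{2}(z)=\bigl(\mathbf{L}_{0}+\frac{\mathbf{G}_{1}\mathbf{A}}{z-z_{1}}\bigr)\mathbf{L}_{0}^{-1}\bigl(\mathbf{L}_{0}+\frac{\mathbf{A}\mathbf{G}_{2}}{z-z_{2}}\bigr)$. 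The real content is therefore the explicit identification of the rank-one matrices $\mathbf{G}_{1}\mathbf{A}$ and $\mathbf{A}\mathbf{G}_{2}$.

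For that, write $\mathbf{L}_{i}=\mathbf{u}_{i}\mathbf{v}^{\dag}_{i}$ in the rank-one form produced by the previous Lemma, with the normalization in which the first entry of $\mathbf{u}_{i}$ and the second entry of $\mathbf{v}^{\dag}_{i}$ equal $1$. Then $\mathbf{L}_{1}\mathbf{L}_{2}=(\mathbf{v}^{\dag}_{1}\mathbf{u}_{2})\,\mathbf{u}_{1}\mathbf{v}^{\dag}_{2}$, so it shares the column direction of $\mathbf{L}_{1}$ and the row direction of $\mathbf{L}_{2}$, while the traces in Lemma~\ref{lem:add2mult}(ii) reduce to the scalar contractions $\operatorname{tr}(\mathbf{L}_{0}^{-1}\mathbf{L}_{i})=\mathbf{v}^{\dag}_{i}\mathbf{L}_{0}^{-1}\mathbf{u}_{i}$ and $\operatorname{tr}(\mathbf{L}_{0}^{-1}\mathbf{L}_{1}\mathbf{L}_{2})=(\mathbf{v}^{\dag}_{1}\mathbf{u}_{2})(\mathbf{v}^{\dag}_{2}\mathbf{L}_{0}^{-1}\mathbf{u}_{1})$, all of which are explicit rational functions of $p$ and $q$. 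Substituting into $\mathbf{G}_{1}\mathbf{A}=\mathbf{L}_{1}+c_{1}\mathbf{L}_{1}\mathbf{L}_{2}$ and $\mathbf{A}\mathbf{G}_{2}=\mathbf{L}_{2}+c_{2}\mathbf{L}_{1}\mathbf{L}_{2}$, with $c_{1},c_{2}$ the coefficients of Lemma~\ref{lem:add2mult}(ii), shows at once that $\mathbf{G}_{1}\mathbf{A}$ retains the column vector $\mathbf{u}_{1}$ and $\mathbf{A}\mathbf{G}_{2}$ retains the row vector $\mathbf{v}^{\dag}_{2}$; hence only the remaining row vector of $\mathbf{G}_{1}\mathbf{A}$, the remaining column vector of $\mathbf{A}\mathbf{G}_{2}$, and the two overall scalar prefactors need to be simplified.

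The main obstacle is then the purely algebraic reduction of these rational expressions in $p,q$ to the compact forms in the statement. Concretely I would (i) clear denominators in the combination $\mathbf{v}^{\dag}_{1}+c_{1}(\mathbf{v}^{\dag}_{1}\mathbf{u}_{2})\mathbf{v}^{\dag}_{2}$ forming the row of $\mathbf{G}_{1}\mathbf{A}$, factor out the common scalar, and renormalize so the second entry is $1$ --- this is where the prefactor $\frac{p}{p-\rho_{1}}$ and the entry $-\rho_{1}(q-k_{1}-\zeta_{2})+\frac{\rho_{1}^{2}(q-z_{1})}{p}$ emerge; (ii) perform the symmetric computation for the column of $\mathbf{A}\mathbf{G}_{2}$, producing $\frac{\rho_{1}}{\rho_{1}-p}$ and the entry $-\rho_{2}(q-k_{2}-\zeta_{1})+\frac{\rho_{2}p}{\rho_{1}}(q-z_{2})$; and (iii) as a consistency check, verify $\operatorname{tr}(\mathbf{G}_{1}\mathbf{A}^{-1})=z_{1}-\zeta_{1}$ and $\operatorname{tr}(\mathbf{G}_{2}\mathbf{A}^{-1})=z_{2}-\zeta_{2}$ as demanded by Lemma~\ref{lem:el_divs}, and that the product of the displayed factors recovers the additive $\mathbf{L}_{1},\mathbf{L}_{2}$ of the previous Lemma. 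Since each $\mathbf{L}_{i}$ is a single $2\times2$ rank-one matrix, these are finite rational-function identities requiring no new idea beyond the two Lemmas already established, and are most safely confirmed by a short symbolic computation.
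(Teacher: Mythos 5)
Your proposal is correct and follows essentially the same route the paper intends: the paper's proof is just ``direct calculation,'' namely substituting the additive representation of the preceding Lemma into the conversion formulas of Lemma~\ref{lem:add2mult}(ii) after rewriting the factorization with $\mathbf{L}_{0}=\mathbf{A}^{2}$. Your rank-one bookkeeping (that $\mathbf{G}_{1}\mathbf{A}$ keeps the column of $\mathbf{L}_{1}$ and $\mathbf{A}\mathbf{G}_{2}$ keeps the row of $\mathbf{L}_{2}$, reducing everything to scalar contractions in $p,q$) is a sound and slightly more explicit organization of that same computation.
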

\begin{proof}
  Proof is a direct calculation.
\end{proof}

\subsection{The isomonodromic Lagrangian and dPV}\label{ssec:the_isomonodromic_lagrangian_and_dpv} 
We are now ready to prove the main Theorem of this section.
\begin{theorem}\label{thm:Isom-Lagr}
  Consider the special isomonodromic transformation of the form
  \begin{equation}
    \mathbf{L}(z)=\mathbf{B}^{\mathbf{A}}_{1}(z) \mathbf{B}^{\mathbf{A}}_{2}(z) \mapsto
    \tilde{\mathbf{L}}(z) = \tilde{\mathbf{B}}^{\mathbf{A}}_{1}(z) 
    \tilde{\mathbf{B}}^{\mathbf{A}}_{2}(z) 
    =\mathbf{B}^{\mathbf{A}}_{2}(z+1) \mathbf{B}^{\mathbf{A}}_{1}(z) = 
    \mathbf{B}^{\mathbf{A}}_{2}(z+1) \mathbf{L}(z) (\mathbf{B}^{\mathbf{A}}_{2}(z))^{-1}.
      \label{eq:dIsomSp-a}
  \end{equation}
  \begin{enumerate}[(i)]
    \item This transformation satisfies the time-dependent Euler-Lagrange equations
    \begin{equation*}
      \frac{\partial \mathcal{L}}{\partial \mathbf{Y}} (\mathbf{Q}_{k-1},\mathbf{Q}_{k},k-1) + 
      \frac{\partial \mathcal{L}}{\partial \mathbf{X}} (\mathbf{Q}_{k},\mathbf{Q}_{k+1},k) = 0
      \label{eq:discrEL-time-a}
    \end{equation*}
    with the Lagrangian function given by 
    \begin{align*}
      \mathcal{L}(\mathbf{X},\mathbf{Y},t) &= 
      (z_{2}(t) - z_{1}) \log(\mathbf{x}^{\dag}_{2} \mathbf{x}_{1}) + 
      (z_{1} - \zeta_{2}(t)) \log(\mathbf{x}^{\dag}_{2} \mathbf{A}^{-1} \mathbf{y}_{1}) + \\ &\qquad
      (\zeta_{2}(t) - \zeta_{1})\log(\mathbf{y}^{\dag}_{2} \mathbf{A}^{-2} \mathbf{y}_{1}) + 
      (\zeta_{1} - z_{2}(t)) \log(\mathbf{y}^{\dag}_{2}\mathbf{A}^{-1} \mathbf{x}_{1}),
    \end{align*}
    where $z_{2}(t) = z_{2} - t$ and $\zeta_{2}(t) = \zeta_{2} - t$.
    \item This transformation defines a birational map $\psi:
    \widehat{\mathcal{M}}^{\theta}_{2}\to\widehat{\mathcal{M}}^{\tilde{\theta}}_{2}$, where 
      $\tilde{z}_{2}=z_{2}-1$, $\tilde{\zeta}_{2}=\zeta_{2}-1$  and all
      other parameters are unchanged. In the spectral coordinates $p$, $q$ on
      $\widehat{\mathcal{M}}^{\theta}_{2}$ and $\tilde{p}$, $\tilde{q}$ on
      $\widehat{\mathcal{M}}^{\tilde{\theta}}_{2}$ this transformation is given by the 
      difference Painlev\'e equation dPV of the Sakai's hierarchy,
      \begin{equation*}
        \label{eq:dPVpq}
        \left\{
            \begin{aligned}
              q + \tilde{q}  & = z_{2} + \tilde{\zeta_{2}} -
              \frac{\rho_{1}(k_{2} + \zeta_{1} - z_{2})}{p-\rho_{1}} +
              \frac{\rho_{2}(k_{2}+\tilde{\zeta}_{2} -
                z_{1})}{p-\rho_{2}}\\
              p\tilde{p} &= \rho_{1} \rho_{2} \frac{(\tilde{q} -
                \zeta_{1})(\tilde{q} -
                z_{1})}{(\tilde{q}-\tilde{\zeta}_{2})(\tilde{q}-\tilde{z}_{2}) }
            \end{aligned}
          \right.\,.
      \end{equation*}   
  \end{enumerate}
\end{theorem}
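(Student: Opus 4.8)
The plan is to prove the two parts of Theorem~\ref{thm:Isom-Lagr} separately, reusing as much of the isospectral machinery from Section~\ref{sec:isosp_dyn} as possible.

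\medskip

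For part (i), the key observation is that the special isomonodromic re-factorization~(\ref{eq:dIsomSp-a}) is structurally the \emph{same} algebraic operation as the isospectral flip of Theorem~\ref{thm:flip}, except that the elementary divisor at the second pole is evaluated at $z+1$ rather than at $z$, i.e.\ the pole $z_{2}$ and its associated zero $\zeta_{2}$ get shifted by $1$ while $z_{1},\zeta_{1}$ are untouched. So first I would redo the residue computation of Theorem~\ref{thm:flip} for the system $\mathbf{B}^{\mathbf{A}}_{2}(z+1)\mathbf{B}^{\mathbf{A}}_{1}(z)=\tilde{\mathbf{B}}^{\mathbf{A}}_{1}(z)\tilde{\mathbf{B}}^{\mathbf{A}}_{2}(z)$: taking residues at $z_{1}$ and $z_{2}-1$ (the poles of the two factors on the left) and at the corresponding zeros, and applying Lemma~\ref{lem:el_divs}, produces the same four relations as in Theorem~\ref{thm:flip}(ii) but with every occurrence of $z_{2}$ replaced by $z_{2}-1=\tilde z_{2}$ and of $\zeta_{2}$ by $\zeta_{2}-1=\tilde\zeta_{2}$ in the factor originating from $\mathbf{B}^{\mathbf{A}}_{2}$. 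In the notation $\undertilde{\mathbf{Q}}=(\undertilde{\mathbf{p}}_{1},\undertilde{\mathbf{q}}^{\dag}_{2})$, $\mathbf{Q}=(\mathbf{p}_{1},\mathbf{q}^{\dag}_{2})$ with the time-dependence $z_{2}(k)=z_{2}-k$, $\zeta_{2}(k)=\zeta_{2}-k$, these are exactly equations~(\ref{eq:eqmot-p}--\ref{eq:eqmot-q}) with $z_{2},\zeta_{2}$ promoted to $z_{2}(k),\zeta_{2}(k)$ at the appropriate time slot. Then, repeating verbatim the computation in the proof of Theorem~\ref{thm:EQ-M}(iii) — computing $\partial\mathcal{L}/\partial\mathbf{y}_{1}$ and $\partial\mathcal{L}/\partial\mathbf{x}^{\dag}_{2}$ of the logarithmic Lagrangian $\mathcal{L}(\mathbf{X},\mathbf{Y},t)$, where the coefficients now carry the argument $t$ — shows that the time-dependent discrete Euler--Lagrange equations~(\ref{eq:discrEL-time-a}), which pair $\partial_{\mathbf Y}\mathcal{L}$ at time $k-1$ with $\partial_{\mathbf X}\mathcal{L}$ at time $k$, reproduce precisely these shifted relations. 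The only subtlety relative to the isospectral case is bookkeeping of \emph{which} time value sits in \emph{which} coefficient, and checking that the derivative of $\mathcal L$ with respect to $t$ plays no role because the Euler--Lagrange equations only differentiate in $\mathbf X,\mathbf Y$.

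\medskip

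For part (ii), I would proceed as in \cite{AriBor:2006:MSDDPE}: express the transformation in the spectral coordinates. Starting from Corollary~\ref{cor:L-mult}, which gives $\mathbf{G}_{1}\mathbf{A}$ and $\mathbf{A}\mathbf{G}_{2}$ explicitly in terms of $p,q$, I form $\tilde{\mathbf{L}}(z)=\mathbf{B}^{\mathbf{A}}_{2}(z+1)\mathbf{L}(z)(\mathbf{B}^{\mathbf{A}}_{2}(z))^{-1}$ and read off the new additive data $\tilde{\mathbf{L}}_{0},\tilde{\mathbf{L}}_{1},\tilde{\mathbf{L}}_{2}$ for the shifted divisor $\tilde z_{1}=z_{1}$, $\tilde z_{2}=z_{2}-1$, $\tilde\zeta_{1}=\zeta_{1}$, $\tilde\zeta_{2}=\zeta_{2}-1$. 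Then I apply the definition~(\ref{eq:pd-def}) of the spectral coordinates to $\tilde{\mathbf{L}}(z)$: $\tilde q$ is the zero of $\tilde{\mathbf{L}}(z)_{12}$ and $\tilde p=\frac{\tilde q-\tilde z_{1}}{\tilde q-\tilde\zeta_{2}}\tilde{\mathbf{L}}(\tilde q)_{11}$. Conjugation by $(\mathbf{B}^{\mathbf{A}}_{2}(z))^{-1}$ only rescales the column/row structure of the residues, so the $(1,2)$-entry of $\tilde{\mathbf L}$ is a controlled rational function of $z$ whose zero I can locate; comparing with the known $p,q$-expressions from the two preceding Lemmas should collapse, after simplification, into the stated system~(\ref{eq:dPVpq}). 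Finally I would note that this system is the standard form of dPV in Sakai's classification (citing \cite{Sak:2001:RSAWARSGPE,AriBor:2006:MSDDPE}), so no separate identification is needed.

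\medskip

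I expect the main obstacle to be the explicit simplification in part (ii): unwinding $\tilde{\mathbf L}(z)=\mathbf{B}^{\mathbf A}_2(z+1)\mathbf L(z)(\mathbf{B}^{\mathbf A}_2(z))^{-1}$ into additive form and then extracting $\tilde q$ from the vanishing of $\tilde{\mathbf L}(z)_{12}$ and $\tilde p$ from $\tilde{\mathbf L}(\tilde q)_{11}$ is a substantial rational-function calculation, and matching the result to the precise right-hand sides of~(\ref{eq:dPVpq}) (in particular getting the $\rho_1$-- and $\rho_2$--pole terms with the correct residues and the product formula $p\tilde p=\rho_1\rho_2(\tilde q-\zeta_1)(\tilde q-z_1)/((\tilde q-\tilde\zeta_2)(\tilde q-\tilde z_2))$) requires care. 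The product relation for $p\tilde p$ should follow more cheaply from the determinant identity $\det\tilde{\mathbf L}(z)=\det\mathbf L_0\,(z-\tilde\zeta_1)(z-\tilde\zeta_2)/((z-\tilde z_1)(z-\tilde z_2))$ evaluated at $z=\tilde q$, exactly as the analogous identity was used in the last Lemma of Section~\ref{ssec:the_spectral_coordinates}; that is the shortcut I would look for first. Part (i), by contrast, is essentially a transcription of the already-proved isospectral computation and should be routine once the time-labelling is set up correctly.
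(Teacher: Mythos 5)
Your proposal is correct in outline, and for part (i) it coincides with the paper, which simply observes that the argument is word-for-word the computation of Theorem~\ref{thm:EQ-M}(iii) with the shifted pole/zero data (your caveat about which time label sits in which coefficient is exactly the only point needing care, and the paper supplies no more detail than you do). For part (ii) your route is genuinely different in execution. You propose to compute the full gauge transform $\tilde{\mathbf{L}}(z)=\mathbf{B}^{\mathbf{A}}_{2}(z+1)\mathbf{L}(z)(\mathbf{B}^{\mathbf{A}}_{2}(z))^{-1}$ in additive form, locate $\tilde q$ as the zero of $\tilde{\mathbf{L}}(z)_{12}$, and then evaluate $\tilde{\mathbf{L}}(\tilde q)_{11}$ — a workable but heavy rational-function computation, as you anticipate. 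The paper instead rewrites the transformation as $(\mathbf{B}^{\mathbf{A}}_{2}(z+1))^{-1}\,\tilde{\mathbf{L}}(z)=\mathbf{B}^{\mathbf{A}}_{1}(z)$ and evaluates this single identity at $z=\tilde q$: by the very definition of the spectral coordinates, $\tilde{\mathbf{L}}(\tilde q)$ is lower triangular with diagonal entries $\tilde p(\tilde q-\tilde\zeta_{2})/(\tilde q-\tilde z_{1})$ and $\rho_{1}\rho_{2}(\tilde q-\tilde\zeta_{1})/(\tilde p(\tilde q-\tilde z_{2}))$ (the analogue of the displayed formula for $\mathbf{L}(q)$), while $\mathbf{G}_{1}\mathbf{A}$ and $\mathbf{A}\mathbf{G}_{2}$ are already known in $p,q$ from Corollary~\ref{cor:L-mult}. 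The identity then says that the second columns of $\mathbf{L}_{0}^{-1}\bigl(\mathbf{L}_{0}-\frac{\mathbf{A}\mathbf{G}_{2}}{\tilde q-\tilde\zeta_{2}}\bigr)$ and $\mathbf{L}_{0}+\frac{\mathbf{G}_{1}\mathbf{A}}{\tilde q-z_{1}}$ are proportional, with proportionality coefficient $\rho_{1}\rho_{2}(\tilde q-\tilde\zeta_{1})/(\tilde p(\tilde q-\tilde z_{2}))$; the two components of this proportionality are exactly the two equations of~(\ref{eq:dPVpq}), the coefficient delivering the $p\tilde p$ relation (so your proposed determinant shortcut is essentially built in). The paper's device buys a much shorter calculation and avoids re-deriving the additive data of $\tilde{\mathbf{L}}$; your version buys nothing extra but is not wrong, provided you are prepared to carry out the simplification you flag as the main obstacle.
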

\begin{proof}
  Part (i) is proved in exactly the same way as Theorem~\ref{thm:EQ-M} part (iii). To establish 
  part (ii), it suffices to rewrite equation~(\ref{eq:dIsomSp-a}) in the form 
  $(\mathbf{B}^{\mathbf{A}}_{2}(z+1))^{-1} \tilde{\mathbf{L}}(z) = \mathbf{B}^{\mathbf{A}}_{1} (z)$
  and then evaluate it at $\tilde{q}$. From the definition of $\tilde{q}$ and $\tilde{p}$, we
  get 
  \begin{equation*}
    \mathbf{L}_{0}^{-1}\left(\mathbf{L}_{0} - \frac{\mathbf{A} \mathbf{G}_{2}}{\tilde{q} - 
    \tilde{\zeta}_{2}}\right) \mathbf{A}^{-1} \begin{bmatrix}
      \frac{\tilde{p}(\tilde{q} - \tilde{\zeta}_{2})}{(\tilde{q} - \tilde{z}_{1})} & 0 \\
      * & \frac{\rho_{1} \rho_{2} (\tilde{q} - \tilde{\zeta}_{1})}{\tilde{p}(\tilde{q} - 
      \tilde{z}_{2})}
    \end{bmatrix} = \left(\mathbf{L}_{0} + \frac{\mathbf{G}_{1} \mathbf{A}}{\tilde{q} 
    - z_{1}}\right) \mathbf{A}^{-1}.
  \end{equation*}
  Thus, second columns of the matrices $\displaystyle \mathbf{L}_{0}^{-1}\left(\mathbf{L}_{0} - 
  \frac{\mathbf{A} \mathbf{G}_{2}}{\tilde{q} - 
  \tilde{\zeta}_{2}}\right)$ and $\displaystyle \left(\mathbf{L}_{0} + 
  \frac{\mathbf{G}_{1} \mathbf{A}}{\tilde{q} - z_{1}}\right)$ are proportional with the
  proportionality coefficient $\displaystyle \frac{\rho_{1} \rho_{2} (\tilde{q} - 
  \tilde{\zeta}_{1})}{\tilde{p}(\tilde{q} - \tilde{z}_{2})}$. Direct calculation then completes 
  the proof.
\end{proof}

\thanks{I am very grateful to I.~Krichever for many helpful and stimulating discussions. 
This research was supported in part by the University of Northern Colorado Summer~2006 SPARC Small Grant
Assistance Program.} 
 
\small
\bibliographystyle{amsalpha}
\providecommand{\bysame}{\leavevmode\hbox to3em{\hrulefill}\thinspace}
\providecommand{\MR}{\relax\ifhmode\unskip\space\fi MR }
\providecommand{\MRhref}[2]{%
  \href{http://www.ams.org/mathscinet-getitem?mr=#1}{#2}
}
\providecommand{\href}[2]{#2}

\end{document}